\newcommand{\T}{\ensuremath{^{\mathsf{\scriptscriptstyle T}}}}
\newcommand{\mpinv}{\ensuremath{^{\dag}}}
\newcommand{\PDF}{\ensuremath{\mathsf{PDF}}}
\newcommand{\1}{\ensuremath{\mathbf{1}}}
\renewcommand{\paragraph}[1]{\vspace{1mm}\noindent\textbf{#1}}
\begin{document}

\begin{titlepage}
\title{The Johnson-Lindenstrauss Transform Itself Preserves Differential Privacy}

\author{
Jeremiah Blocki\thanks{Supported in part by the National Science Foundation Science and Technology Center TRUST as well as a NSF Graduate fellowship.}\hspace{1cm} Avrim Blum\thanks{Supported in part by the National Science Foundation under grants
CCF-1101215 and CCF-1116892.} \hspace{1cm} Anupam Datta\thanks{Supported in part by the National Science Foundation Science and Technology Center TRUST.} \hspace{1cm} Or Sheffet\thanks{Supported in part by the National Science Foundation under grants
CCF-1101215 and CCF-1116892 as well as by the MSR-CMU Center for
Computational Thinking.} \\
Carnegie Mellon University\\
\texttt{\small \{jblocki@cs, avrim@cs, danupam@andrew, osheffet@cs\}.cmu.edu}
}
\date{\today}
\maketitle

\begin{abstract}
This paper proves that an ``old dog'', namely -- the classical Johnson-Lindenstrauss transform, ``performs new tricks'' -- it  gives a novel way of preserving differential privacy. We show that if we take two databases, $D$ and $D'$, such that (i) $D'-D$ is a rank-$1$ matrix of bounded norm and (ii) all singular values of $D$ and $D'$ are sufficiently large, then multiplying either $D$ or $D'$ with a vector of iid normal Gaussians yields two statistically close distributions in the sense of differential privacy. Furthermore, a small, deterministic and  \emph{public} alteration of the input is enough to assert that all singular values of $D$ are large.

We apply the Johnson-Lindenstrauss transform to the task of approximating cut-queries: the number of edges crossing a $(S,\bar S)$-cut in a graph. We show that the JL transform allows us to \emph{publish a sanitized graph} that preserves edge differential privacy (where two graphs are neighbors if they differ on a single edge) while adding only $O(|S|/\epsilon)$ random noise to any given query (w.h.p). Comparing the additive noise of our algorithm to existing algorithms for answering cut-queries in a differentially private manner, we outperform all others on small cuts ($|S| = o(n)$).

We also apply our technique to the task of estimating the variance of a given matrix in any given direction. The JL transform allows us to \emph{publish a sanitized covariance matrix} that preserves differential privacy w.r.t bounded changes (each row in the matrix can change by at most a norm-$1$ vector) while adding random noise of magnitude independent of the size of the matrix (w.h.p). In contrast, existing algorithms introduce an error which depends on the matrix dimensions.
\end{abstract}
\thispagestyle{empty}
\end{titlepage}

\section{Introduction}
\label{sec:intro}

The celebrated Johnson Lindenstrauss transform~\cite{JL84} is widely used across many areas of Computer Science. A very non-exhaustive list of related applications include metric and graph embeddings~\cite{Bourgain_1985,Linial:1994}, computational speedups~\cite{Sarlos06, vempala2005random}, machine learning~\cite{BalcanBV06,Schulman:2000}, information retrieval~\cite{Papadimitriou98latentsemantic}, nearest-neighbor search~\cite{Kleinberg97twoalgorithms, Indyk98approximatenearest, Ailon:2006:ANN:1132516.1132597}, and compressed sensing~\cite{Baraniuk_Davenport_DeVore_Wakin_2008}. This paper unveils a new application of the Johnson Lindenstrauss transform -- it also preserves differential privacy.

Consider a scenario in which a trusted curator gathers personal information from $n$ individuals, and wishes to release statistics about these individuals to the public without compromising any individual's privacy. \emph{Differential privacy}~\cite{Dwork06calibratingnoise} provides a robust guarantee of privacy for such data releases. It guarantees that for any two neighboring databases (databases that differ on the details of any single individual), the curator's distributions over potential outputs are statistically close (see formal definition in Section~\ref{sec:preliminaries}). By itself, preserving differential privacy isn't hard, since the curator's answers to users' queries can be so noisy that they obliterate any useful data stored in the database. Therefore, the key research question in this field is to provide tight \emph{utility and privacy tradeoffs}. 

The most basic technique that preserves differential privacy and gives good utility guarantees is to add relatively small Laplace or Gaussian noise to a query's true answer. 
This simple technique lies at the core of an overwhelming majority of algorithms that preserve differential privacy. In fact, many differentially private algorithms follow a common outline. They take an existing algorithm and revise it by adding such random noise each time the algorithm operates on the sensitive data. Proving that the revised algorithm preserves differential privacy is almost immediate, because differential privacy is composable. On the other hand, providing good bounds on the revised algorithm's utility follows from bounding the overall noise added to the algorithm, which is often difficult. This work takes the complementary approach. We show that an existing algorithm preserves differential privacy provided we slightly alter the input in a reversible way. Our analysis of the algorithm's utility is immediate, whereas privacy guarantees require a non-trivial proof. 

We prove that by multiplying a given database with a vector of iid normal Gaussians, we can output the result while preserving differential privacy (assuming the database has certain properties, see ``our technique''). This technique is no other than the Johnson-Lindenstrauss transform, and it's guaranteed to preserve w.h.p the $L_2$ norm of the given database up to a small multiplicative factor. Therefore, whenever answers to users' queries can be formalized as the length of the product between the given database and a query-vector, utility bounds are straight-forward. 

For example, consider the case where our input is composed of $n$ points in $\mathbb{R}^d$ given as a $n\times d$ matrix. We define two matrices as neighbors if they differ on a single row and the norm of the difference is at most $1$.\footnote{This notion of neighboring inputs, also considered in~\cite{McSherryM09, HR11}, is somewhat different than the typical notion of privacy, allowing any individual to change her attributes arbitrarily.} Under this notion of neighbors, a simple privacy preserving mechanism allows us to output the mean of the rows in $A$, but what about the covariance matrix $A\T A$? We prove that the JL transform gives a $(\epsilon,\delta)$-differentially private algorithm that outputs a sanitized covariance matrix. Furthermore, for \emph{directional variance queries}, where users give a unit-length vector $x$ and wish to know the variance of $A$ along $x$ (see definition in Section~\ref{sec:preliminaries}), we give utility bounds that are \emph{independent of $d$ and $n$}. In contrast, all other differentially private algorithms that answer directional variance queries have utility guarantees that depend on $d$ or $n$. Observe that our utility guarantees are somewhat weaker than usual. Recall that the JL lemma guarantees that w.h.p lengths are preserved up to a small multiplicative error, so for each query our algorithm's estimation has w.h.p small multiplicative error and additional additive error.

A special case of directional variance queries is \emph{cut-queries} of a  graph. Suppose our database is a graph $G$ and users wish to know how many edges cross a $(S,\bar S)$-cut. Such a query can be formalized by the length of the product $E_G \1_S$, where $E_G$ is the \emph{edge-matrix} of $G$ and $\1_S$ is the indicator vector of $S$ (see Section~\ref{sec:preliminaries}). We prove that the JL transform allows us to publish a perturbed Laplacian of $G$ while preserving $(\epsilon,\delta)$-differential privacy, w.r.t two graphs being neighbors if they differ only on a single edge. Comparing our algorithm to existing algorithms, we show that we add (w.h.p) $O(|S|)$ random noise to the true answer (alternatively: w.h.p we add only constant noise to the query $\frac {\1_S\T E_G\T E_G \1_S}{\1_S\T \1_S}$). In contrast, all other algorithms add noise proportional to the number of vertices (or edges) in the graph.

\paragraph{Our technique.} It is best to demonstrate our technique on a toy example. Assume $D$ is a database represented as a $\bitsn$-vector, and suppose we sample a vector $Y$ of $n$ iid normal Gaussians and publish $X = Y\T D$. Our output is therefore distributed like a Gaussian random variable of $0$ mean and variance $\sigma^2 = \|D\|^2$. Assume a single entry in $D$ changes from $0$ to $1$ and denote the new database as $D'$. Then $X' = Y\T D'$ is distributed like a Gaussian of $0$-mean and variance $\lambda^2 = \|D\|^2+1$. Comparing $\PDF_X(x) = (2\pi\sigma^2)^{-1/2} \exp(-x^2/(2\sigma^2))$ to $\PDF_{X'}(x) = (2\pi\lambda^2)^{-1/2} \exp(-x^2/(2\lambda^2))$ we have that $\forall x, ~ \sqrt{\lambda^2/\sigma^2} \PDF_{X'}(x) \geq \PDF_X(x) \geq \exp(-\frac {x^2}{2\sigma^2} \cdot \frac 1{ \lambda^2}) \PDF_{X'}(x)$. Using concentration bounds on Gaussians we deduce that if $\lambda^2 > \sigma^2  = \Omega(\log(1/\delta)/\epsilon)$, then w.p $\geq 1-\delta$ both $\PDF$s are within multiplicative factor of $e^{\pm\epsilon}$. We now repeat this process $r$ times (setting $\epsilon,\delta$ accordingly) s.t. the JL lemma assures that (after scaling) w.h.p we output a vector of norm $(1\pm\eta) \|D\|^2$ for a given $\eta$. We get utility guarantees for publishing the number of ones in $D$ while preserving $(\epsilon,\delta)$-differential privacy.

Keeping with our toy example, one step remains -- to convert the above analysis so that it will hold for any database, and not only databases with $w \stackrel{\mathrm{def}} = \log(1/\delta)/\epsilon$ many ones. One way is to append the data with $w$ one entries, but observe: this ends up in outputting $X + N$ where $N$ is random Gaussian noise! In other word, appending the data with ones makes the above technique worse (noisier) than the classical technique of adding random Gaussian noise. Instead, what we do is to ``translate the database''. We apply a simple \emph{deterministic} affine transformation s.t. $D$ turns into a $\{\sqrt {\frac w n}, 1\}^n$-vector. Applying the JL algorithm to the translated database, we output a vector whose norm squared is $\approx (1\pm \eta)(\|D\|^2+w)$. Clearly, users can subtract $w$ from the result, and we end up with $\eta w$ additive random noise (in addition to the multiplicative noise).\footnote{Observe that in this toy example, our $O(\log(1/\delta)/\epsilon)$ noise bound is still worse than the noise bound of $O(\sqrt{\log(1/\delta)}/\epsilon)$ one gets from adding Gaussian noise. However, in the applications detailed in Sections~\ref{sec:perturbed_laplacian} and~\ref{sec:covariance_matrix}, the idea of changing the input will be the key ingredient in getting noise bounds that are independent of $n$ and $d$.}

It is tempting to think the above analysis suffices to show that privacy is also preserved in the multidimensional case. After all, if we  multiply the edge matrix of a graph $G$ with a vector of iid normal Gaussians, we get a vector with each entry distributed like a Gaussian; and if we replace $G$ with a neighboring $G'$, we affect only two entries in this vector. Presumably, applying the previous analysis to both entries suffices to prove we preserve differential privacy. But this intuition is false. Multiplying $E_G$ with a random vector does not result in $n$ independent Gaussians, but rather in one multivariate Gaussian. This is best illustrated with an example. Suppose $G$ is a graph and $S$ is a subset of nodes s.t. no edge crosses the $(S,\bar S)$-cut. Therefore $E_G\1_S$ is the zero-vector, and no matter what random projection we pick, $Y\T E_G \1_S = 0$. In contrast, by adding a single edge that crosses the $(S,\bar S)$-cut, we get a graph $G'$ s.t. $\Pr[Y\T E_{G'}1_S \neq 0] = 1$.

\paragraph{Organization.} Next we detail related work. Section~\ref{sec:preliminaries} details important notations and important preliminaries. In Sections~\ref{sec:perturbed_laplacian} and~\ref{sec:covariance_matrix} we convert the above univariate intuition to the multivariate Gaussian case. 
\ifx\ConferenceVersion\undefined
Section~\ref{sec:perturbed_laplacian} describes our results for graphs and cut-queries, and in Section~\ref{subsec:comparison_of_algorithsm} we compare our method to other algorithms. Section~\ref{sec:covariance_matrix} details the result for directional queries (the general case), then a comparison with other algorithms.
\else 
Section~\ref{sec:perturbed_laplacian} describes our results for graphs and cut-queries, and Section~\ref{sec:covariance_matrix} details the result for directional queries (the general case). Due to space limitations, the comparison of our algorithms with existing algorithms is deferred to the appendix.
\fi 
Even though there are clear similarities between the  analyses in Sections~\ref{sec:perturbed_laplacian} and~\ref{sec:covariance_matrix}, we provide both because the graph case is simpler and analogous to the univariate Gaussian case. Suppose $G$ and $G'$ are two graphs without and with a certain edge resp., then $G$ induces the multivariate Gaussian with the ``smaller'' variance, and $G'$ induces the multivariate Gaussian with the ``larger'' variance. In contrast, in the general case there's no notion of ``smaller'' and ``larger'' variances. Also, the noise bound in the general case is larger than the one for the graph case, and the theorems our analysis relies on are more esoteric. 
Section~\ref{sec:open_problems} concludes with a discussion and open problems.

\subsection{Related Work}
\label{subsec:related}

Differential privacy was developed through a series of papers~\cite{DinurN03, Dwork06calibratingnoise, Chawla05towardprivacy, Blum2005}. Dwork et al~\cite{Dwork06calibratingnoise} gave the first formal definition and the description of the basic Laplace mechanism. Its Gaussian equivalent was defined in~\cite{DworkKMMN06}. Other mechanisms for preserving differential privacy include the Exponential Mechanism of McSherry and Talwar~\cite{mcsherry2007mechanism, blum2008learning}; the recent Multiplicative Weights mechanism of Hardt and Rothblum~\cite{hardt2010multiplicative} and its various extensions~\cite{HardtLM10, GuptaHRU11, GuptaRU12}; the Median Mechanism~\cite{roth2010interactive} and a boosting mechanism of Dwork et al~\cite{DRV10}. In addition, the classical Randomized Response (see~\cite{Warner65}) preserves differential privacy as discussed in recent surveys~\cite{dwork2010differential, Dwork11survey}. The task of preserving differential privacy when the given database is a graph or a social network was studied by Hay et al~\cite{HayLMJ09} who presented a privacy preserving algorithm for publishing the degree distribution in a graph. They also introduce multiple notions of neighboring graphs, one of which is for the change of a single edge. Nissim et al~\cite{nissim2007smooth} (see full version) studied the case of estimating the number of triangles in a graph, and Karwa et al~\cite{KRSY11} extended this result to other graph structures. Gupta et al~\cite{GuptaRU12} studied the case of answering $(S,T)$-cut queries, for two disjoint subsets of nodes $S$ and $T$. All latter works use the same notion of neighboring graphs as we do. In differential privacy it is common to think of a database as a matrix, but seldom one gives utility guarantees for queries regarding global properties of the input matrix. Blum et al~\cite{Blum2005} approximate the input matrix with the PCA construction by adding $O(d^2)$ noise to the input. The work of McSherry and Mironov~\cite{McSherryM09} (inspired by the Netflix prize competition) defines neighboring databases as a change in a single entry, and introduces $O(k^2)$ noise while outputting a rank-$k$ approximation of the input. The work of Hardt and Roth~\cite{HR11} gives a low-rank approximation of a given input matrix while adding $\min\{\sqrt d,\sqrt n\}$ noise by following the elegant framework of Halko et al~\cite{HalkoMT11}. According to~\cite{HR11}, a recent and not-yet-published work of Kapralov, McSherry and Talwar preserves rank-$1$ approximations of a given PSD matrix with error $O(n)$.

The body of work on the JL transform is by now so extensive that only a book may survey it properly~\cite{vempala2005random}. 
In the context of differential privacy, the JL lemma has been used to reduce dimensionality of an input prior to adding noise or other forms of privacy preservation. Blum et al~\cite{blum2008learning} gave an algorithm that outputs a sanitized dataset for learning large-margin classifiers by appealing to JL related results of~\cite{BalcanBV06}. Hardt and Roth~\cite{HR11} gave a privacy preserving version of an algorithm of~\cite{HalkoMT11} that uses randomize projections onto the image space of a given matrix. Blum and Roth~\cite{BlumRoth11} used it to reduce the noise added to answering sparse queries. 
The way the JL lemma was applied in these works is very different than the way we use it.


\section{Basic Definitions, Preliminaries and Notations}
\label{sec:preliminaries}

\paragraph{Privacy and utility.} In this work, we deal with two types of inputs: $[0,1]$-weighted graphs over $n$ nodes and $n\times d$ real matrices. (We treat $w_{a,b} = 0$ as no edge between $a$ and $b$). Trivially extending the definition in~\cite{nissim2007smooth, KRSY11}, two weighted $n$-nodes graphs $G$ and $G'$ are called \emph{neighbors} if they differ on the weight of a single edge $(a,b)$. Like in~\cite{HR11}, two $n\times d$-matrices are called \emph{neighbors} if all the coordinates on which $A$ and $A'$ differ lie on a single row $i$, s.t. $\|A_{(i)}-A'_{(i)}\|^2 \leq 1$, where $A_{(i)}$ denotes the $i$-th row of $A$. 

\begin{definition}
\label{def:privacy}
An algorithm $\textsf{ALG}$ which maps inputs into some range $\mathcal{R}$ maintains \emph{$(\epsilon,\delta)$-differential privacy} if for all pairs of neighboring inputs $\mathcal{I},\mathcal{I}'$ and for all subsets $\mathcal{S}\subset\mathcal{R}$ it holds that
\[\Pr[\mathsf{ALG}(\mathcal{I}) \in \mathcal{S}] \leq e^\epsilon \Pr[\mathsf{ALG}(\mathcal{I'}) \in \mathcal{S}] + \delta\]
\end{definition}

For each type of input we are interested in answering a different type of query. For graphs, we are interesting in \emph{cut-queries}: given a nonempty subset $S$ of the vertices of the graph, we wish to know what is the total weight of edges crossing the $(S,\bar S)$-cut. We denote this as $\Phi_G(S) = \sum_{u\in S, v\notin S} w_{u,v}$. 
\begin{definition}
We say an algorithm $\textsf{ALG}$ gives a $(\eta,\tau,\nu)$-approximation for cut queries, if for every nonempty $S$ it holds that 
\[\Pr\left[(1-\eta)\Phi_G(S) - \tau \leq \mathsf{ALG}(S) \leq (1+\eta)\Phi_G(S) + \tau\right] \geq 1 - \nu\]
\end{definition}

For $n\times d$ matrices, we are interested in \emph{directional variance queries}: given a unit-length direction $x$, we wish to know what's the variance of $A$ along the $x$ direction: $\Phi_A(x) = x\T A\T A x$. (Our algorithm normalizes $A$ s.t. the mean of its $n$ rows is $0$.)
\begin{definition}
We say an algorithm $\textsf{ALG}$ gives a $(\eta,\tau,\nu)$-approximation for directional variance queries, if for every unit-length vector $x$ it holds that 
\[\Pr\left[(1-\eta)\Phi_A(x) - \tau \leq \mathsf{ALG}(x) \leq (1+\eta)\Phi_A(x) + \tau\right] \geq 1 - \nu\]
\end{definition}

\paragraph{Some Linear Algebra.} Given a $m\times n$ matrix $M$ its Singular Value Decomposition (SVD) is $M = U \Sigma V\T$ where $U\in \mathbb{R}^{m\times m}$ and $V\in\mathbb{R}^{n\times n}$ are unitary matrices, and $\Sigma$ has non-zero values only on its main diagonal. Furthermore, there are exactly $rank(M)$ positive values on the main diagonal, denoted $\sigma_1(M) \geq \ldots \geq \sigma_{rank(M)}(M)$, called the \emph{singular values}. This allows us to write $M$ as the sum of $rank(M)$ rank-$1$ matrices: $M = \sum_{i=1}^{rank(M)} \sigma_i u_i v_i\T$. Because $\Sigma$ has non-zero values only on its main diagonal, the notation $\Sigma^i$ denotes a matrix whose non-zero values lie only on the main diagonal and are $\sigma_1^i(M), \sigma_2^i(M), \ldots,\sigma_{rank(M)}^i(M)$. Using the SVD, it is clear that if $M$ is of full-rank, then $M^{-1} = V\Sigma^{-1} U\T$, and that if $n=m=rank(M)$ then $\det(M) = \prod_{i=1}^{n}\sigma_i(M)$. Furthermore, even when $M$ is not full-rank, the SVD allows us to use similar notation to denote the generalizations of the inverse and of the determinant: The Moore-Penrose inverse of $M$ is $M\mpinv = V \Sigma^{-1} U\T$; and the pseudo-determinant of $M$ is $\tilde\det(M) = \prod_{i=1}^{rank(M)} \sigma_i(M)$. A $n\times n$ symmetric matrix is called \emph{positive semidefinite} (PSD) if it holds that $x\T M x\geq 0$ for every $x\in\mathbb{R}^n$. Given two PSDs $M$ and $N$ we denote the fact that $(N-M)$ is PSD by $M\preceq N$. For further details, see~\cite{HornJohnson}. 

\paragraph{Gaussian distribution.} Given a r.v. $X$, we denote by $X\sim\mathcal{N}(\mu,\sigma^2)$ the fact that $X$ has normal distribution with  mean $\mu$ and variance $\sigma^2$. Recall that $\PDF_X(x) = \frac {1} {\sqrt{2\pi\sigma^2}} \exp(-(x-\mu)^2/2\sigma^2)$. We repeatedly apply the \emph{linear combination} rule: for any two i.i.d normal random variables s.t. $X\sim \mathcal{N}(\mu_X, \sigma^2_X)$ and $Y\sim \mathcal{N}(\mu_Y, \sigma^2_Y)$, we have that their linear combination $Z = aX+bY$ is distributed according to $Z\sim \mathcal{N}(a\mu_X + b\mu_Y, a^2\sigma_X^2 + b^2\sigma_Y^2)$. This in turn allows us to identify a random variable $R\sim\mathcal{N}(0,\sigma^2)$ with the random variable $\sigma R'$, where $R'\sim\mathcal{N}(0,1)$. Classic concentration bounds on Gaussians give that $Pr[|x-\mu|^2 > \log(1/\delta)\sigma^2] \leq 2\delta$. 

The multivariate normal distribution is the multi-dimension extension of the univariate normal distribution. $X\sim \mathcal{N}(\mu, \Sigma)$ denotes a $m$-dimensional multivariate r.v. whose mean is $\mu \in \mathbb{R}^m$, and variance is the PSD matrix $\Sigma = \E\left[(X-\mu)(X-\mu)\T\right]$. If $\Sigma$ has full rank ($\Sigma$ is positive definite) then $\PDF_X(x) = \frac 1 {\sqrt{ (2\pi)^m \det(\Sigma) }}\exp(-\frac 1 2 x\T \Sigma^{-1} x)$, a well defined function. If $\Sigma$ has non-trivial kernel space then $\PDF_X$ is technically undefined (since $X$ is defined only on a subspace of volume $0$, yet $\int_{\mathbb{R}^m}\PDF_X(x)dx = 1$). However, if we restrict ourselves only to the subspace $\mathcal{V} = (Ker(\Sigma))^{\perp}$, then $\PDF_X^{\mathcal{V}}$ is defined over $\mathcal{V}$ and $\PDF_X^{\mathcal{V}}(x) = \frac 1 {\sqrt{(2\pi)^{rank(\Sigma)} \tilde\det(\Sigma)}} \exp(-\frac 1 2 x\T\Sigma\mpinv x)$. From now on, we omit the superscript from the $\PDF$ and refer to the above function as the $\PDF$ of $X$. Observe that using the SVD, we can denote $\Sigma = U~\textrm{diag}(\sigma_1^2, \sigma_2^2, \ldots, \sigma_r^2,0, \ldots, 0) ~U\T$, and so $\mathcal{V}$ is the subspace spanned by the first $r$ rows of $U$. The multivariate extension of the linear combination rule is as follows. If $A$ is a $n\times m$ matrix, then the multivariate r.v. $Y = AX$ is distributed as though $Y\sim\mathcal{N}(A\mu, A \Sigma A\T)$. For further details regarding multivariate Gaussians see~\cite{miller1964multidimensional}.

Finally, we conclude these Gaussian preliminaries with the famous Johnson-Lindenstrauss Lemma, our main tool in this paper.
\begin{theorem}[The Johnson Lindenstrauss transform~\cite{JL84}]
\label{thm:JL}
Fix any $0< \eta < 1/2$. Let $M$ be a $r\times m$ matrix whose entries are iid samples from $\mathcal{N}(0,1)$. Then $\forall x \in \mathbb{R}^{m}$. 
\[\Pr_M\left[ (1-\eta)\|x\|^2 \leq \frac 1 r \|Mx\|^2 \leq (1+\eta)\|x\|^2\right] \geq 1 - 2\exp(-\eta^2r/8)\]
\end{theorem}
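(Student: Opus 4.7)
The first step is to reduce to the unit-vector case. The event inside the probability is invariant under rescaling $x$ (both $\|x\|^2$ and $\frac{1}{r}\|Mx\|^2$ scale by the same positive constant), so I assume $\|x\|=1$ and argue that $\frac{1}{r}\|Mx\|^2$ concentrates around $1$.

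Second, I use the linear combination rule for Gaussians recalled in the preliminaries to identify the distribution of $Mx$. Writing $M_{(i)}$ for the $i$-th row of $M$, the coordinate $(Mx)_i = \sum_{j=1}^m M_{ij} x_j$ is a linear combination of independent $\mathcal{N}(0,1)$ variables with coefficients $x_j$, so $(Mx)_i \sim \mathcal{N}\!\bigl(0, \sum_j x_j^2\bigr) = \mathcal{N}(0,1)$. Different coordinates of $Mx$ depend on disjoint rows of $M$ and are therefore mutually independent. Hence $\|Mx\|^2 = \sum_{i=1}^{r} Y_i^2$ for iid $Y_i \sim \mathcal{N}(0,1)$, a $\chi^2_r$ random variable of mean $r$.

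Third, I apply a Chernoff-style argument. A direct one-dimensional Gaussian integral gives the moment generating function $\mathbb{E}[e^{tY_i^2}] = (1-2t)^{-1/2}$ for $t < 1/2$, and independence promotes this to $\mathbb{E}[e^{t\|Mx\|^2}] = (1-2t)^{-r/2}$. For the upper tail, Markov's inequality yields
\[
\Pr\!\left[\tfrac{1}{r}\|Mx\|^2 \geq 1+\eta\right] \;\leq\; (1-2t)^{-r/2}\,e^{-t(1+\eta)r} \qquad (t>0),
\]
and optimizing at $t = \eta/(2(1+\eta))$ gives the closed-form bound $\bigl[(1+\eta)e^{-\eta}\bigr]^{r/2}$. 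A symmetric computation, parametrizing the MGF with $s = -t > 0$, handles the lower tail and produces an even smaller bound; a union bound then yields the claimed $2\exp(-\eta^2 r/8)$.

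The only real ``calculation'' is converting $\bigl[(1+\eta)e^{-\eta}\bigr]^{r/2} \leq e^{-\eta^2 r/8}$, which reduces to showing $\ln(1+\eta) - \eta \leq -\eta^2/4$ for $0 < \eta < 1/2$. This follows from the alternating Taylor expansion $\ln(1+\eta) = \eta - \eta^2/2 + \eta^3/3 - \cdots$: truncating after the cubic term gives $\ln(1+\eta) - \eta \leq -\eta^2/2 + \eta^3/3 = -\eta^2(1/2 - \eta/3)$, which is at most $-\eta^2/4$ whenever $\eta \leq 3/4$, hence certainly for $\eta < 1/2$. There is no conceptual obstacle; the one thing to be careful about is choosing the constant $8$ precisely enough, and the Taylor estimate above is sharp enough for that purpose.
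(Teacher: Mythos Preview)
Your proof is correct and is the standard Chernoff/MGF argument for the Gaussian Johnson--Lindenstrauss lemma. Note, however, that the paper does not actually prove Theorem~\ref{thm:JL}: it is stated in the preliminaries as a known result with a citation to~\cite{JL84} and is used as a black box, so there is no ``paper's own proof'' to compare against. Your write-up would serve perfectly well as a self-contained justification; the reduction to $\|x\|=1$, the identification of $\|Mx\|^2$ as $\chi^2_r$, the MGF optimization, and the Taylor estimate $\ln(1+\eta)-\eta\le -\eta^2/4$ for $\eta<3/4$ are all carried out correctly.
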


\paragraph{Laplacians and edge-matrices.} An undirected weighted graph $G = (V(G),E(G))$ can be represented in various ways. One representation is by the \emph{adjacency matrix} $A$, where $A_{u,v} = w_{u,v}$. Another way is by the $\binom n 2 \times n$ \emph{edge matrix} of the graph, $E_G$. We assume that the vertices of $G$ are ordered arbitrarily, and for each pair of vertices $\{u,v\}$ where $u<v$, there exists a row in $E_G$. The entries of $E_G$ are
\[ \big(E_G\big)_{(\{u,v\}, x)} = \left\{ \sqrt{w_{u,v}}, \textrm{ if } u\sim_G v\textrm{ and } x=u\ ; \qquad -\sqrt{w_{u,v}}, \textrm{ if } u\sim_G v\textrm{ and } x=v\ ; \qquad 0, \textrm{ o/w}\right\}\] 
where $u\sim_G v$ denotes that $(u,v)$ is an edge in $G$. Alternatively, one can represent $G$ using the \emph{Laplacian} of the graph $L_G = E_G\T E_G$. Formally, the matrix $L_G$ is the matrix whose diagonal entries are $(L_G)_{u,u} = \sum_{x \sim_G u} w_{x,u}$ and non diagonal entries are $(L_G)_{u,v} = -w_{u,v}$.
It is simple to verify that for any $x$, the following equality holds: $x\T L_G x = \sum_{u \sim_G v} w_{u,v}(x_u-x_v)^2$. As a corollary, if we take any nonempty $S\subsetneq V(G)$ and denote its $\bitsn$-indicator vector as $\1_S$,
then $\1_S\T~L_G~\1_S = \|E_G\1_S\|^2 = \sum_{u\in S, v\notin S} w_{u,v} = \Phi_G(S)$.

\paragraph{Additional notations.} We denote by $e_a$ the indicator vector of $a$. We denote by $e_{a,b} = e_a - e_b$. It follows that the $n\times n$ matrix $L_{a,b} = e_{a,b} e_{a,b}\T$ is the matrix whose projection over coordinates $a,b$ is $\begin{pmatrix}1& -1\cr -1 & 1\cr\end{pmatrix}$, while every other entry is $0$. We also denote $E_{a,b}$ as the $\binom n 2\times n$ matrix, whose rows are all zeros except for the row indexed by the $(a,b)$ pair, which is $e_{a,b}\T$. Observe: $L_{a,b} = e_{a,b} e_{a,b}\T = E_{a,b}\T E_{a,b}$.

\section{Publishing a Perturbed Laplacian}
\label{sec:perturbed_laplacian}

\subsection{The Johnson-Lindenstrauss Algorithm}
\label{subsec:JL-algo}
We now show that the Johnson Lindenstrauss transform preserves differential privacy. We first detail our algorithm, then analyze it.

\begin{algorithm}[H]
\DontPrintSemicolon
\KwIn{A $n$-node graph $G$, parameters: $\epsilon, \delta, \eta,\nu > 0$}
\KwOut{A Laplacian of a graph $\tilde L$}
\BlankLine
Set $r = \frac {8 \ln(2/\nu)}{\eta^2}$, and $w = \frac {\sqrt {32r\ln(2/\delta)}} {\epsilon} \ln(4r/\delta)$ \;
For every $u\neq v$, set $w_{u,v} \leftarrow \frac w n + \left(1-\frac w n\right)w_{u,v}$. \;
Pick a matrix $M$ of size $r\times \binom n 2$, whose entries are iid samples of $\mathcal{N}(0,1)$.\;
\KwRet{$\tilde L = \frac 1 r E_{G}\T M\T ME_{G}$}\;

\caption{Outputting the Laplacian of a Graph while Preserving Differential Privacy\label{alg:output_perturbed_laplacian}}
\end{algorithm}

\begin{algorithm}[H]
\DontPrintSemicolon
\LinesNotNumbered
\KwIn{A non empty $S\subsetneq V(G)$, parameters $n$, $w$ and Laplacian $\tilde L$ from Algorithm~\ref{alg:output_perturbed_laplacian}.}
\BlankLine
\KwRet{$R(S) = \frac 1 {1-\frac w n} \left(\1_S\T \tilde L \1_S - w \frac {s(n-s)}n\right)$}\;
\caption{Approximating $\Phi_G(S)$\label{alg:approx_cut}}
\end{algorithm}

\begin{theorem}
\label{thm:perturbed_laplacian_preserves_privacy}
Algorithm~\ref{alg:output_perturbed_laplacian} preserves $(\epsilon,\delta)$-differential privacy w.r.t to edge changes in $G$.
\end{theorem}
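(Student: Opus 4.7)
I would start by noting that the released Laplacian is a deterministic function of $Y := ME_{G}$ (namely $\tilde L = \tfrac{1}{r}Y\T Y$), so by the post-processing property of differential privacy it suffices to show that the $r\times n$ matrix $Y$ is itself $(\epsilon,\delta)$-differentially private. The multivariate-Gaussian linear combination rule gives that the $r$ rows of $Y$ are independent draws from $\mathcal{N}(0, L_G)$, since $Y_{(i)}\T = E_G\T M_{(i)}\T$ and $E_G\T E_G = L_G$. For a neighboring graph $G'$ differing on edge $(a,b)$, the affine translation in line~2 of the algorithm preserves the neighboring property, giving $L_{G'} - L_G = \Delta\,L_{a,b} = \Delta\,e_{a,b}e_{a,b}\T$ for some $|\Delta|\leq 1$. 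Crucially, after translation both $L_G$ and $L_{G'}$ share the same kernel $\mathrm{span}(\1)$, because every pair $(u,v)$ now has weight at least $w/n$ so the graph is connected.

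I would next compute the log density ratio $\log\bigl(f_G(Y)/f_{G'}(Y)\bigr)$ in closed form using the pseudo-determinant and Moore--Penrose inverse, since both covariances are singular but share the support subspace $\mathcal{V} = \1^\perp$. The key observation is that $e_{a,b}\in\mathcal{V}$, which legitimizes applying the matrix-determinant lemma $\tilde\det(L_{G'}) = \tilde\det(L_G)(1+\alpha)$ and the Sherman--Morrison identity $L_{G'}\mpinv = L_G\mpinv - \tfrac{\Delta\,L_G\mpinv e_{a,b} e_{a,b}\T L_G\mpinv}{1+\alpha}$, where $\alpha := \Delta\cdot e_{a,b}\T L_G\mpinv e_{a,b}$. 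Writing $z := Y L_G\mpinv e_{a,b}\in\mathbb{R}^r$ and using cyclicity of the trace to recognize the quadratic form $\mathrm{tr}\bigl(Y\,(L_{G'}\mpinv-L_G\mpinv)\,Y\T\bigr)$, the log-ratio collapses to
\[ \tfrac{r}{2}\log(1+\alpha) \;-\; \tfrac{\Delta}{2(1+\alpha)}\,\|z\|^2. \]

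I would then bound this quantity with high probability using two ingredients. First, a deterministic spectral bound: because the affine translation adds a copy of the $K_n$-Laplacian with uniform weight $w/n$ (whose smallest nonzero eigenvalue is exactly $w$) to $L_G$, we have $L_G\mpinv\preceq (1/w)\,I$ on $\mathcal{V}$, and hence $|\alpha|\leq 2/w$ since $\|e_{a,b}\|^2 = 2$. Second, a concentration bound: the coordinates of $z$ are i.i.d.\ $\mathcal{N}(0, \gamma)$ with $\gamma := e_{a,b}\T L_G\mpinv e_{a,b}\leq 2/w$, so $\|z\|^2/\gamma$ is $\chi^2_r$-distributed and concentrates around $r$ with deviation $O(\sqrt{r\log(1/\delta)})$ except with probability $\delta$. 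Substituting and Taylor-expanding $\log(1+\alpha)$ and $1/(1+\alpha)$, the expectation (``KL'') term contributes $O(r\alpha^2) = O(r/w^2)$ and the fluctuation contributes $O(\sqrt{r\log(1/\delta)}/w)$; both are at most $\epsilon$ under the algorithm's choice $w = \sqrt{32r\ln(2/\delta)}\,\ln(4r/\delta)/\epsilon$. The high-probability log-ratio bound then translates into $(\epsilon,\delta)$-privacy in the sense of Definition~\ref{def:privacy} by the standard argument that integrates $f_G/f_{G'}$ over the ``good'' event.

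The main obstacle I anticipate is carefully handling the rank-deficient covariance matrices: neither $L_G$ nor $L_{G'}$ is invertible, so the density exists only on $\mathcal{V}$, and extending both Sherman--Morrison and the matrix-determinant lemma to the pseudo-inverse setting relies precisely on verifying that the update direction $e_{a,b}$ lies in the common support $\1^\perp$ (which it does, since its entries sum to zero) and that $1+\alpha>0$ (ensured by $|\alpha|<1$ once $w>2$). A secondary technical nuisance is tracking the constants and logarithmic factors through the $\chi^2$-concentration so that the resulting bound exactly matches the slightly oversized $w$ specified by the algorithm, rather than merely an $\tilde O(\sqrt{r\log(1/\delta)}/\epsilon)$-scale quantity.
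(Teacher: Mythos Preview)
Your proposal is correct, and it takes a genuinely different route from the paper's own proof.

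The paper proceeds \emph{row by row}: it fixes a single row $E_G\T Y$ with $Y\sim\mathcal N(0,I)$, shows this one sample is $(\epsilon_0,\delta_0)$-private with $\epsilon_0=\epsilon/\sqrt{4r\ln(2/\delta)}$ and $\delta_0=\delta/(2r)$, and then invokes the advanced composition theorem of~\cite{DRV10} across the $r$ rows. Within a single row it exploits the asymmetry $L_G\preceq L_{G'}$ (after orienting the change so that $G'$ has the extra edge weight): this ordering gives the bound $\PDF_G\leq e^{\epsilon_0}\PDF_{G'}$ \emph{deterministically} via the determinant ratio, and the other direction is obtained from the identity
\[
x\T L_G\mpinv x - x\T L_{G'}\mpinv x \;=\;(1-\tfrac{w}{n})\,\bigl(x\T L_G\mpinv e_{a,b}\bigr)\cdot\bigl(e_{a,b}\T L_{G'}\mpinv x\bigr),
\]
which mixes both pseudoinverses and is then bounded as a product of two (correlated) scalar Gaussians.

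Your approach instead treats all $r$ rows at once, applies Sherman--Morrison and the matrix determinant lemma (on the common support $\1^\perp$) to collapse the entire log-likelihood ratio to the closed form $\tfrac r2\log(1+\alpha)-\tfrac{\Delta}{2(1+\alpha)}\|z\|^2$, and finishes with a single $\chi^2_r$ concentration bound on $\|z\|^2$. This buys you a cleaner, symmetric argument that needs neither the PSD ordering nor advanced composition, and it makes the KL term $O(r\alpha^2)$ and the fluctuation term $O(|\alpha|\sqrt{r\ln(1/\delta)})$ explicit. The paper's route, on the other hand, isolates the structural fact $L_G\preceq L_{G'}$ as the conceptual driver and keeps the analysis at the level of a single Gaussian sample, which is what generalizes (with more work) to the non-ordered covariance setting of Section~\ref{sec:covariance_matrix}. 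Your ``secondary nuisance'' about matching the exact value of $w$ is real but harmless: the paper's $w$ already carries the extra $\ln(4r/\delta)$ factor needed to absorb the $O(\ln(1/\delta))$ term in the Laurent--Massart $\chi^2$ tail, so the constants close.
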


\begin{theorem}
\label{thm:approx_cut}
For every $\eta,\nu>0$ and a nonempty $S$ of size $s$, Algorithm~\ref{alg:approx_cut} gives a $\left(\eta, \tau, \nu \right)$-approximation for cut queries, for $\tau = O\big ( s\cdot \frac {\sqrt{\ln(1/\delta)\ln(1/\nu)}} \epsilon \left(\ln(1/\delta) + \ln(\ln(1/\nu)/\eta^2)\right) \big )$.
\end{theorem}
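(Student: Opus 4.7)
The plan is to show that Algorithm~\ref{alg:approx_cut} is essentially a deterministic affine inversion applied on top of a single application of the JL lemma. First, observe that Step~2 of Algorithm~\ref{alg:output_perturbed_laplacian} deterministically replaces the input $G$ by a graph $G'$ whose edge weights are $w'_{u,v} = \frac{w}{n} + (1-\frac{w}{n}) w_{u,v}$ for every (ordered) pair. For any nonempty $S$ of size $s$, summing over the $s(n-s)$ crossing pairs gives the exact identity
$$\Phi_{G'}(S) \;=\; \frac{w}{n}\,s(n-s) \;+\; \Bigl(1 - \tfrac{w}{n}\Bigr)\Phi_G(S),$$
an affine function of $\Phi_G(S)$ whose coefficients use only the public quantities $n$, $w$, and $s$. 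This is exactly the relation that Algorithm~\ref{alg:approx_cut} inverts: subtract $\frac{w}{n}s(n-s)$ and divide by $(1-w/n)$.

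Second, connect the output $\1_S\T \tilde L \1_S$ to $\Phi_{G'}(S)$ via the JL lemma. By the preliminaries, $\|E_{G'}\1_S\|^2 = \1_S\T L_{G'}\1_S = \Phi_{G'}(S)$, and by construction $\1_S\T \tilde L \1_S = \frac{1}{r}\|M E_{G'}\1_S\|^2$. Apply Theorem~\ref{thm:JL} to the fixed vector $x = E_{G'}\1_S$; since $r = 8\ln(2/\nu)/\eta^2$, the failure probability is $2\exp(-\eta^2 r / 8) \leq \nu$, so with probability at least $1-\nu$
$$(1-\eta)\,\Phi_{G'}(S) \;\leq\; \1_S\T \tilde L \1_S \;\leq\; (1+\eta)\,\Phi_{G'}(S).$$
Substituting the affine identity and then executing the Step of Algorithm~\ref{alg:approx_cut}, the $(1\pm\eta)$ on the $(1-w/n)\Phi_G(S)$ piece becomes the multiplicative error, while the $(1\pm\eta)$ on the $\frac{w}{n}s(n-s)$ piece becomes pure additive slack:
$$R(S) \;\in\; (1 \pm \eta)\,\Phi_G(S) \;\pm\; \frac{\eta\,w\,s(n-s)/n}{1 - w/n}.$$

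Finally, to recover the stated $\tau$, bound the additive term. Use $s(n-s)/n \leq s$ and absorb the harmless constant $1/(1-w/n)$ (valid in the meaningful regime $w \leq n/2$) into the $O(\cdot)$. It remains to evaluate $\eta w$. With $r = 8\ln(2/\nu)/\eta^2$ we have $\sqrt{r} = O(\sqrt{\ln(1/\nu)}/\eta)$, so
$$\eta w \;=\; \eta \cdot \frac{\sqrt{32 r \ln(2/\delta)}}{\epsilon}\,\ln(4r/\delta) \;=\; O\!\left( \frac{\sqrt{\ln(1/\delta)\ln(1/\nu)}}{\epsilon} \cdot \bigl(\ln(1/\delta) + \ln(\ln(1/\nu)/\eta^2)\bigr)\right),$$
where we used $\ln(4r/\delta) = O(\ln(1/\delta) + \ln(\ln(1/\nu)/\eta^2))$. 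Multiplying by $s$ yields the stated $\tau$. The proof is short because the privacy analysis (Theorem~\ref{thm:perturbed_laplacian_preserves_privacy}) already does all the heavy lifting; the only thing to track carefully here is the accounting of the $(1\pm\eta)$ factors through the affine inversion, so that the term scaling with $\Phi_G(S)$ emerges as the multiplicative error while the term scaling with $w\cdot s(n-s)/n$ emerges as the additive $\tau$.
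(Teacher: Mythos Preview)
Your proposal is correct and follows essentially the same approach as the paper: apply the JL lemma to the edge-matrix of the modified graph (which the paper calls $H$ and you call $G'$), then invert the deterministic affine transformation of Step~2 to separate the $(1\pm\eta)$ multiplicative error on $\Phi_G(S)$ from the additive term $\eta w\cdot s(n-s)/n \leq \eta w\cdot s$, and finally substitute the value of $w$. The paper's version is slightly terser (it stops at $\tau \leq 2\eta w\cdot s$ and leaves the expansion of $\eta w$ implicit), while you spell out the $\ln(4r/\delta)$ bookkeeping explicitly, but the argument is the same.
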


Clearly, once Algorithm~\ref{alg:output_perturbed_laplacian} publishes $\tilde L$, any user interested in estimating $\Phi_G(S)$ for some nonempty $S\subsetneq V(G)$ can run Algorithm~\ref{alg:approx_cut} on her own. Also, observe that $w$ is independent of $n$, which we think of as large number, so we assume thoughout the proofs of both theorems that both $\frac w n, \frac 1 w$ are $ < 1/2$. 
\ifx\ConferenceVersion\undefined
Now, the proof of Theorem~\ref{thm:approx_cut} is immediate from the JL Lemma.

\begin{proof}[Proof of Theorem~\ref{thm:approx_cut}]
Let us denote $G$ as the input graph for Algorithm~\ref{alg:output_perturbed_laplacian}, and $H$ as the graph resulting from the changes in edge-weights Algorithm~\ref{alg:output_perturbed_laplacian} makes. Therefore, \[L_H = L_{\frac w n K_n} + L_{(1-\frac w n)G} = \frac w n L_{K_n} +\left(1-\frac w n\right) L_G \]
Fix $S$. The JL Lemma (Theorem~\ref{thm:JL}) assures us that w.p. $\geq 1-\nu$ we have
\[ (1-\eta) \1_S\T L_H \1_S \leq \1_S\T \tilde L \1_S \leq (1+\eta) \1_S\T L_H \1_S\]
The proof now follows from basic arithmetic and the value of $w$.
\begin{eqnarray*}
& R(S) & \leq \frac 1 {1-\frac w n} \left((1+\eta)\1_S\T L_H \1_S - w \frac {s(n-s)}n\right) \cr 
&& = \frac 1 {1-\frac w n} \left((1+\eta)\frac w n s(n-s) + (1+\eta)(1-\frac w n)\1_S\T L_G \1_S - w \frac {s(n-s)}n\right) \cr
&& \leq (1+\eta) \Phi_G(S) + \frac 1 {1-\frac w n} \eta w \cdot s = (1+\eta) \Phi_G(S) + \tau 
\end{eqnarray*}
where $\tau \leq 2\eta w\cdot s$. The lower bound is obtained exactly the same way.
\end{proof}

\else
The proof of Theorem~\ref{thm:approx_cut} is immediate from the JL lemma. It is deferred to Appendix~\ref{sec:apx_laplacians_completions}, along with the comparison of Algorithm~\ref{alg:output_perturbed_laplacian} with existing algorithms.
\fi

\paragraph{Comment.} The guarantee of Theorem~\ref{thm:approx_cut} is not to be mistaken with a weaker guarantee of providing a good approximation to \emph{most} cut-queries. Theorem~\ref{thm:approx_cut} guarantees that any set of $k$ predetermined cuts is well-approximated by Algorithm~\ref{alg:approx_cut}, assuming Algorithm~\ref{alg:output_perturbed_laplacian} sets $\nu < 1/2k$. In contrast, giving a good approximation to most cuts can be done by a very simple (and privacy preserving) algorithm: by outputting the number of edges in the graph (with small Laplacian noise). Afterall, we expect a cut to have $\frac {m} {\binom n 2} s(n-s)$ edges crossing it.

We turn our attention to the proof of Theorem~\ref{thm:perturbed_laplacian_preserves_privacy}.
We fix any two graphs $G$ and $G'$, which differ only on a single edge, $(a,b)$. We think of $(a,b)$ as an edge in $G'$ which isn't present in $G$, and in the proof of Theorem~\ref{thm:perturbed_laplacian_preserves_privacy}, we identify $G$ with the manipulation Algorithm~\ref{alg:output_perturbed_laplacian} performs over $G$, and assume that the edge $(a,b)$ is present in both graphs, only it has weight $\frac w n$ in $G$, and weight $1$ in $G'$. Clearly, this analysis carries on for a smaller change, when the edge $(a,b)$ is present in both graphs but with different weights. (Recall, we assume all edge weights are bounded by $1$.)

Now, the proof follows from assuming that Algorithm~\ref{alg:output_perturbed_laplacian} outputs the matrix $O = ME_G$, instead of $\tilde L = \frac 1 r O\T O$. (Clearly, outputting $O$ allows one to reconstruct $\tilde L$.) Observe that $O$ is composed of $r$ identically distributed rows: each row is created by sampling a $\binom n 2$-dimensional vector $Y$ whose entries $\sim\mathcal{N}(0,1)$, then outputting $Y\T E_G$. Therefore, we prove Theorem~\ref{thm:perturbed_laplacian_preserves_privacy} by showing that each row maintain $(\epsilon_0, \delta_0)$-differential privacy, for the right parameters $\epsilon_0,\delta_0$. To match standard notion, we transpose row vectors to column vectors, and compare the distributions $E_G\T Y$ and $E_{G'}\T Y$. 

\begin{claim} 
\label{clm:inequalities_of_privacy}
Set $\epsilon_0 = \frac {\epsilon} {\sqrt{4r\ln(2/\delta)}}, \delta_ 0 = \frac {\delta}{2r}$. Then, 
\begin{equation}
\label{eq:privacy_UB} 
\forall x,\ \  \PDF_{E_G\T Y}(x) \leq~~ e^{\epsilon_0} \PDF_{E_{G'}\T Y}(x)
\end{equation}
Denote $S = \{x:\ \PDF_{E_G\T Y}(x) ~\geq~~ e^{-\epsilon_0}\PDF_{E_{G'}\T Y}(x)\}$. Then
\begin{equation}
\label{eq:privacy_LB}  
\Pr[S] \geq 1- \delta_0
\end{equation}
\end{claim}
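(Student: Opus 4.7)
The plan is to reduce both inequalities to explicit manipulations of two multivariate Gaussians. By the linear combination rule, $E_G\T Y\sim\mathcal{N}(0,L_G)$ and $E_{G'}\T Y\sim\mathcal{N}(0,L_{G'})$. Algorithm~\ref{alg:output_perturbed_laplacian} has injected a complete graph of weight $\tfrac{w}{n}$, so both $L_G$ and $L_{G'}$ have rank $n-1$ with common kernel $\mathrm{span}(\1)$; the PDFs are thus defined on the same subspace $\mathcal{V}=\1^{\perp}$, which also contains $e_{a,b}$. Writing the single-edge change as $L_{G'}=L_G+c\,e_{a,b}e_{a,b}\T$ with $c=1-\tfrac{w}{n}\in(0,1)$, I plan to apply the matrix determinant lemma and the Sherman--Morrison identity, both valid after restricting to $\mathcal{V}$ where the two Laplacians are genuinely invertible. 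Setting $u=L_G\mpinv e_{a,b}$ and $v=e_{a,b}\T L_G\mpinv e_{a,b}$, this gives
\[
\tilde\det(L_{G'})=(1+cv)\,\tilde\det(L_G),\qquad L_G\mpinv-L_{G'}\mpinv=\frac{c\,uu\T}{1+cv},
\]
which, substituted into the Gaussian density on $\mathcal{V}$, yields the closed form
\[
\log\frac{\PDF_{E_G\T Y}(x)}{\PDF_{E_{G'}\T Y}(x)}=\tfrac{1}{2}\log(1+cv)-\frac{c\,(u\T x)^2}{2(1+cv)}\qquad\text{for every }x\in\mathcal{V}.
\]

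The quantitative engine of the proof is the spectral estimate $v\le 2/w$, which is precisely what the ``add a scaled complete graph'' preprocessing step was designed to supply: since $L_G\succeq\tfrac{w}{n}L_{K_n}$ in the PSD order, inversion on $\mathcal{V}$ gives $L_G\mpinv\preceq\tfrac{n}{w}L_{K_n}\mpinv$, and $L_{K_n}\mpinv$ acts as $\tfrac{1}{n}I$ on $\1^{\perp}$, so $v\le\tfrac{n}{w}\cdot\tfrac{2}{n}=\tfrac{2}{w}$. Inequality~(\ref{eq:privacy_UB}) is then immediate from the closed form: the second summand is non-positive for every $x$, so the log-ratio is at most $\tfrac{1}{2}\log(1+cv)\le \tfrac{cv}{2}\le\tfrac{1}{w}$, and the chosen values of $w$ and $\epsilon_0$ give $\tfrac{1}{w}\le\epsilon_0$. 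For inequality~(\ref{eq:privacy_LB}) I sample $x=E_G\T Y$, so that $u\T x$ is a univariate centred Gaussian with variance $u\T L_G u=e_{a,b}\T L_G\mpinv L_G L_G\mpinv e_{a,b}=v$ (using $L_GL_G\mpinv e_{a,b}=e_{a,b}$ because $e_{a,b}\in\mathrm{image}(L_G)$). A single one-dimensional Gaussian tail bound then gives $(u\T x)^2\le 2v\ln(2/\delta_0)$ with probability at least $1-\delta_0$, so on this event the negative summand is bounded by $\tfrac{cv\ln(2/\delta_0)}{1+cv}\le\tfrac{2\ln(2/\delta_0)}{w}$, which the parameter choice makes at most $\epsilon_0$.

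The main obstacle is the careful bookkeeping of pseudo-inverse and pseudo-determinant identities in the rank-deficient setting: one must verify that $\ker(L_G)=\ker(L_{G'})=\mathrm{span}(\1)$ and that $e_{a,b}\in\mathrm{image}(L_G)$, so that Sherman--Morrison and the matrix determinant lemma apply verbatim after restricting to $\mathcal{V}$ and the two PDFs can be compared pointwise on this common support. Once this is in place, everything else is a single univariate Gaussian tail bound married with the deterministic estimate $v\le 2/w$.
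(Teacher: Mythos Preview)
Your proof is correct. It reaches the same conclusion as the paper but by a different and somewhat cleaner route.

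The paper bounds the determinant ratio $\tilde\det(L_{G'})/\tilde\det(L_G)$ globally, via Weyl's inequality (Fact~\ref{fact:Weyl_inequality}) together with the trace bound $\sum_i(\lambda_i^2-\sigma_i^2)\le 2$; and it bounds the quadratic-form difference by writing $x\T L_G\mpinv x - x\T L_{G'}\mpinv x = c\,(e_{a,b}\T L_G\mpinv x)(e_{a,b}\T L_{G'}\mpinv x)$ and applying two separate Gaussian tail bounds to the two factors. You instead invoke the matrix determinant lemma and Sherman--Morrison on the rank-$1$ perturbation, which collapses everything to the single scalar $v=e_{a,b}\T L_G\mpinv e_{a,b}$ (the effective resistance between $a$ and $b$) and a single univariate Gaussian $(u\T x)^2$. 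In fact the two decompositions are algebraically equivalent: Sherman--Morrison gives $L_{G'}\mpinv e_{a,b}=\tfrac{1}{1+cv}L_G\mpinv e_{a,b}$, so the paper's two-factor product is exactly your $\tfrac{c}{1+cv}(u\T x)^2$. Your version is tidier here (one tail bound instead of two, exact determinant ratio instead of an eigenvalue sandwich), while the paper's eigenvalue-comparison style is what generalizes more readily to the covariance-matrix setting of Section~\ref{sec:covariance_matrix}, where the update is no longer a PSD rank-$1$ increment and Sherman--Morrison does not apply as cleanly.
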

\begin{proof}[Proof of Theorem~\ref{thm:perturbed_laplacian_preserves_privacy} based on Claim~\ref{clm:inequalities_of_privacy}]
Apply the composition theorem of~\cite{DRV10} for $r$ iid samples each preserving $(\epsilon_0,\delta_0)$-differential privacy.
\end{proof}

To prove Claim~\ref{clm:inequalities_of_privacy}, we denote $X = E_G\T Y$ and $X' = E_{G'}\T Y$. From the preliminaries it follows that $X$ is a multivariate Gaussian distributed according to $\mathcal{N}(0, E_G\T I_{\scriptscriptstyle \binom n 2 \times \binom n 2} E_G) = \mathcal{N}(0, L_G)$, and similarly, $X'\sim\mathcal{N}(0, L_{G'})$. In order to analyze the two distributions, $\mathcal{N}(0, L_G)$ and $\mathcal{N}(0, L_{G'})$, we now discuss several of the properties of $L_G$ and $L_{G'}$, then turn to the proof of Claim~\ref{clm:inequalities_of_privacy}.

First, it is clear from definition that the all ones vector, $\mathbf{1}$, belongs to the kernel space of $E_G$ and $E_{G'}$, and therefore to the kernel space of $L_G$ and $L_{G'}$. Next, we establish a simple fact.
\begin{fact}
\label{fact:1_is_only_kernel}
If $G$ is a graph s.t. for every $u\neq v$ we have that $w_{u,v}>0$, then $\mathbf{1}$ is the only vector in the kernel space of $E_G$ and $L_G$.
\end{fact}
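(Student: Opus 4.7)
The plan is to prove the two kernel statements in sequence, starting with $E_G$ and then leveraging the identity $L_G = E_G^{\mathsf{T}} E_G$ to transfer the conclusion to $L_G$. The key observation is that the structure of $E_G$ makes its action on a vector $x$ completely transparent: the row indexed by the pair $\{u,v\}$ with $u<v$ has the single nonzero contributions $\sqrt{w_{u,v}}$ at coordinate $u$ and $-\sqrt{w_{u,v}}$ at coordinate $v$. Consequently, $(E_G x)_{\{u,v\}} = \sqrt{w_{u,v}}\,(x_u - x_v)$.

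First I would assume $x\in\ker(E_G)$, so $(E_G x)_{\{u,v\}}=0$ for every pair $\{u,v\}$. Since by hypothesis $w_{u,v}>0$ for every $u\neq v$, we may divide by $\sqrt{w_{u,v}}$ and conclude $x_u=x_v$ for every pair $u\neq v$. Hence $x$ is constant across all coordinates, i.e.\ $x=c\cdot\mathbf{1}$ for some scalar $c$. Combined with the observation from the preliminaries that $\mathbf{1}\in\ker(E_G)$, this gives $\ker(E_G)=\operatorname{span}(\mathbf{1})$.

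Next I would handle $L_G$. Using $L_G = E_G\T E_G$, for any $x\in\mathbb{R}^n$ we have $x\T L_G x = \|E_G x\|^2 \geq 0$. If $x\in\ker(L_G)$, then $L_G x=0$ forces $x\T L_G x=0$, hence $\|E_G x\|=0$ and $x\in\ker(E_G)$. Conversely $\ker(E_G)\subseteq\ker(L_G)$ trivially. Therefore $\ker(L_G)=\ker(E_G)=\operatorname{span}(\mathbf{1})$.

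There is no real obstacle here; the claim is essentially a restatement of the well-known fact that a weighted graph Laplacian has one-dimensional kernel spanned by $\mathbf{1}$ precisely when the graph is connected, and here the assumption that every $w_{u,v}$ is strictly positive makes the graph the complete graph with positive weights, which is trivially connected. The only thing to be careful about is to invoke strict positivity of each $w_{u,v}$ when cancelling $\sqrt{w_{u,v}}$ in the first step, so that no edge equation degenerates.
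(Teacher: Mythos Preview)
Your proof is correct and essentially the same as the paper's. The paper compresses both cases into one line by using the quadratic-form identity $\|E_Gx\|^2 = x\T L_G x = \sum_{u\neq v} w_{u,v}(x_u-x_v)^2$ and observing that any non-zero $x\perp\mathbf{1}$ has coordinates of both signs, but the underlying argument (strict positivity of each $w_{u,v}$ forces $x_u=x_v$ for all pairs) is identical to yours.
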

\begin{proof}
Any non-zero $x\perp\mathbf{1}$ has at least one positive coordinate and one negative coordinate, thus the non-negative sum $\|E_Gx\|^2 = x\T L_G x = \sum_{u\neq v} w_{u,v}(x_u-x_v)^2$ is strictly positive.
\end{proof}

Therefore, the kernel space of both $L_G$ and of $L_{G'}$ is exactly the $1$-dimensional span of the $\mathbf{1}$ vector (for every possible outcome $y$ of $Y$ we have that $E_G\T y \cdot \mathbf{1} = E_{G'}\T y \cdot \mathbf{1} = 0$). Alternatively, both $X$ and $X'$ have support which is exactly $\mathcal{V} = \mathbf{1}^\perp$. Hence, we only need to prove the inequalities of Claim~\ref{clm:inequalities_of_privacy} for $x\in\mathcal{V}$. Secondly, observe that $L_{G'} = L_G + (1-\frac w n) L_{a,b}$. Therefore, it holds that for every $x\in \mathbb{R}^n$ we have $x\T L_{G'} x = x\T L_G x + (1-\frac w n)(x_a-x_b)^2 \geq x\T L_G x$. In other words, $L_{G} \preceq L_{G'}$, a fact that yields several important corollaries.

We now introduce notation for the Singular Value Decomposition of both $L_G$ and $L_{G'}$. We denote $E_G\T = U \Sigma V\T$ and $E_{G'}{\T} = U' \Lambda V'{\T}$, resulting in $L_G = U \Sigma^2 U\T, L_{G'} = U' \Lambda^2 U'{\T}, L_{G}\mpinv = U \Sigma^{-2} U\T$ and $L_{G'}\mpinv = U' \Lambda^{-2} U'{\T}$. We denote the singular values of $L_G$ as $\sigma_1^2 \geq \ldots \geq \sigma_{n-1}^2 >  \sigma_n^2=0$, and the singular values of $L_{G'}$ as $\lambda_1^2 \geq \ldots \geq \lambda_{n-1}^2 > \lambda_n^2=0$. Weyl's inequality allows us to deduce the following fact. Its and other facts' proofs are in Appendix~\ref{sec:apx_linear_algebra}.
\begin{fact}
\label{fact:Weyl_inequality}
Since $L_{G}\preceq L_{G'}$ then for every $i$ we have that $\lambda_i^2 \geq \sigma_i^2$.
\end{fact}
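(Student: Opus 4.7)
The plan is to prove this by invoking the Courant-Fischer min-max characterization of eigenvalues, which is essentially how Weyl's inequality is derived. Since both $L_G$ and $L_{G'}$ are symmetric PSD matrices, their singular values coincide with their eigenvalues, so it suffices to show that the $i$-th largest eigenvalue of $L_{G'}$ dominates the $i$-th largest eigenvalue of $L_G$.

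Recall the min-max characterization: for any symmetric matrix $M$ with eigenvalues $\mu_1 \geq \mu_2 \geq \ldots \geq \mu_n$,
\[\mu_i(M) = \max_{\substack{S \subseteq \mathbb{R}^n \\ \dim(S) = i}} \ \min_{\substack{x \in S \\ \|x\| = 1}} x\T M x.\]
I would apply this to both $L_G$ and $L_{G'}$. The hypothesis $L_G \preceq L_{G'}$ means $x\T L_G x \leq x\T L_{G'} x$ for every $x\in\mathbb{R}^n$, so for any $i$-dimensional subspace $S$,
\[\min_{x\in S, \|x\|=1} x\T L_G x \ \leq\ \min_{x\in S, \|x\|=1} x\T L_{G'} x.\]
Taking the maximum over all $i$-dimensional subspaces $S$ on both sides yields $\sigma_i^2 = \mu_i(L_G) \leq \mu_i(L_{G'}) = \lambda_i^2$, as required.

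There is no real obstacle here; the fact is a direct consequence of the monotonicity of eigenvalues with respect to the PSD ordering, and the min-max principle makes this immediate. The only thing to note is that we use the symmetry of $L_G$ and $L_{G'}$ (which holds because they are Laplacians of undirected graphs) to identify their singular values with their eigenvalues, and we use the fact that both are PSD so that these eigenvalues are non-negative, legitimizing the square-root notation $\sigma_i, \lambda_i$ in the statement.
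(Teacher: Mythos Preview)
Your proof is correct and takes essentially the same approach as the paper: both invoke the Courant--Fischer min-max principle and use the pointwise inequality $x\T L_G x \leq x\T L_{G'} x$ to compare the inner minimizations, then pass to the maximum over $i$-dimensional subspaces. The paper's appendix phrases it by fixing the maximizing subspace for $L_G$ and using it to lower-bound the maximum for $L_{G'}$, but this is just a cosmetic variant of your argument.
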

In addition, since Algorithm~\ref{alg:output_perturbed_laplacian} alters the input graphs s.t. the complete graph $\frac w n L_{K_n}$ is contained in $G$, then it also holds that $\frac w n L_{K_n}\preceq L_G$, and so Fact~\ref{fact:Weyl_inequality}  gives that for every $1\leq i \leq n-1$ we have that $\sigma_i^2 \geq w = \frac w n \cdot n$.  (It is simple to see that the eigenvalues of $K_n$ are $\{n,n,\ldots, n, 0\}$.) Furthermore, as $L_{G'} = L_G + (1-\frac w n)L_{a,b}$ and the singular values of $L_{a,b}$ are $\{2,0,0,\ldots,0\}$, then we have that 
\[\sum_i \lambda_i^2 = tr(L_{G'}) \leq tr(L_G) + tr\left((1-\frac w n)L_{a,b}\right) \leq \sum_i\sigma_i^2 + 2\]

Another fact we can deduce from $L_G \preceq L_{G'}$, is the following.
\begin{fact}
\label{fact:inverse_order}
Since the kernels of $L_G$ and of $L_{G'}$ are identical, then for every $x$ it holds that $x\T L_{G'}\mpinv x \leq x\T L_G\mpinv x$. Symbolically, $L_{G} \preceq L_{G'} \Rightarrow L_{G'}\mpinv \preceq L_G\mpinv$.
\end{fact}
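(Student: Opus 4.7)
The plan is to reduce to the invertible case by restricting both operators to the common orthogonal complement of their shared kernel, and then invoke the classical operator antitone property of matrix inversion on positive definite matrices. Let $\mathcal{V} = \mathbf{1}^\perp$, so that $\text{Ker}(L_G) = \text{Ker}(L_{G'}) = \mathcal{V}^\perp$. On $\mathcal{V}$, both $L_G$ and $L_{G'}$ act as honest positive definite operators (call these restrictions $\tilde L_G$ and $\tilde L_{G'}$), and the pseudo-inverses $L_G^{\dagger}, L_{G'}^{\dagger}$ agree with the ordinary inverses $\tilde L_G^{-1}, \tilde L_{G'}^{-1}$ on $\mathcal{V}$ while vanishing on $\mathcal{V}^\perp$.

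Next, I would observe that the hypothesis $L_G \preceq L_{G'}$ immediately restricts to $\tilde L_G \preceq \tilde L_{G'}$ on $\mathcal{V}$, since the quadratic form inequality $x\T L_G x \leq x\T L_{G'} x$ holds in particular for all $x \in \mathcal{V}$. Then I would apply the well-known fact that inversion is operator antitone on positive definite matrices: if $0 \prec A \preceq B$ then $B^{-1} \preceq A^{-1}$. The quickest justification is to set $C = \tilde L_G^{-1/2}\,\tilde L_{G'}\,\tilde L_G^{-1/2} \succeq I$, from which $C^{-1} \preceq I$, and then conjugate by $\tilde L_G^{-1/2}$ to obtain $\tilde L_{G'}^{-1} \preceq \tilde L_G^{-1}$.

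Finally, I would lift the inequality from $\mathcal{V}$ back to all of $\mathbb{R}^n$. For any $x \in \mathbb{R}^n$, write $x = x_\parallel + x_\perp$ with $x_\parallel \in \text{span}(\mathbf{1})$ and $x_\perp \in \mathcal{V}$. Because $L_G^\dagger \mathbf{1} = L_{G'}^\dagger \mathbf{1} = 0$ and both pseudo-inverses map $\mathcal{V}$ into itself, we get
\[
x\T L_{G'}^\dagger x \;=\; x_\perp\T \tilde L_{G'}^{-1} x_\perp \;\leq\; x_\perp\T \tilde L_G^{-1} x_\perp \;=\; x\T L_G^\dagger x,
\]
which is exactly $L_{G'}^\dagger \preceq L_G^\dagger$. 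The only place where any care is required is in confirming that the kernels really coincide, which is precisely the content of the hypothesis (and of Fact~\ref{fact:1_is_only_kernel} applied to both $G$ and $G'$, both of which contain the scaled complete graph $\frac{w}{n}K_n$); without this alignment of kernels, the operator-antitone step would fail because the pseudo-inverse is not monotone with respect to $\preceq$ in general.
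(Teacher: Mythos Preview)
Your argument is correct and follows essentially the same route as the paper's proof. The paper also restricts to $(\mathrm{Ker}\,L_G)^\perp$, reduces via conjugation by $L_G^{-1/2}$ (written there in SVD coordinates as $V\Sigma^{-1}V\T$) to the statement $I \preceq C \Rightarrow C^{-1} \preceq I$, proves that auxiliary step explicitly, and then lifts back by the same $x = x_0 + x_\perp$ decomposition you use; your presentation is simply the coordinate-free version of the same conjugation trick.
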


Having established the above facts, we can turn to the proof of privacy.

\begin{proof}[Proof of Claim~\ref{clm:inequalities_of_privacy}]
We first prove the upper bound in~\eqref{eq:privacy_UB}. As mentioned, we focus only on $x\in \mathcal{V} = \mathbf{1}^\perp$, where 
\begin{eqnarray*}
&&\PDF_{E_G\T Y}(x) = \left((2\pi)^{n-1} \tilde\det(L_G)\right)^{-1/2} \exp(-\frac 12 x\T L_G\mpinv x) \cr
&&\PDF_{E_{G'}\T Y}(x) =  \left((2\pi)^{n-1} \tilde\det(L_{G'})\right)^{-1/2}\exp(-\frac 12 x\T L_{G'}\mpinv x)\cr
\end{eqnarray*}
As noted above, we have that for every $x$ it holds that $x\T L_{G'}\mpinv x \leq x\T L_G\mpinv x$, so $\exp(-\frac 12 x\T L_G\mpinv x) \leq \exp(-\frac 12 x\T L_{G'}\mpinv x)$. It follows that for every $x$ we have that $\frac {\PDF_{E_G\T Y}(x)} {\PDF_{E_{G'}\T Y}(x)} \leq \left(\frac {\tilde\det (L_{G'})} {\tilde\det (L_G)} \right)^{1/2} = \left(\prod_{i=1}^{n-1} \frac {\lambda_i^2} {\sigma_i^2}\right)^{1/2} $.
Denoting $\Delta_i = \lambda^2_i - \sigma_i^2\geq 0$, and recalling that $\sum_i \Delta_i \leq 2$ and that $\forall i, \sigma_i^2 \geq w$ it holds that
\[  \frac {\PDF_{E_G\T Y}(x)} {\PDF_{E_{G'}\T Y}(x)} \leq \sqrt{ \prod_{i=1}^{n-1} \left(1+\frac {\Delta_i} {\sigma_i^2} \right)}\leq \exp\left(\frac 1 {2w} \sum_i \Delta_i \right)\leq e^{\frac 1 w} \leq e^{\frac {\epsilon} {\sqrt{4r\ln(2/\delta)}}} = e^{\epsilon_0}\]

We now turn to the lower bound of~\eqref{eq:privacy_LB}. We start with analyzing the term $x\T L_{G}\mpinv x$ that appears in $\PDF_{E\T Y}(x)$. Again, we emphasize that $x\in \mathcal{V}$, justifying the very first equality below.
\begin{eqnarray*}
& x\T L_{G}\mpinv x & = x\T L_G\mpinv L_{G'} L_{G'}\mpinv x = x\T L_G\mpinv \left(L_{G}+(1-\frac w n) L_{ab}\right) L_{G'}\mpinv x \cr
&& = x\T L_{G'}\mpinv x\  + \  (1-\frac w n) x\T L_G\mpinv L_{a,b} L_{G'}\mpinv x \cr
&& = x\T L_{G'}\mpinv x\  + \ \ (1-\frac w n) x\T L_G\mpinv e_{a,b} ~ \cdot ~ e_{a,b}\T L_{G'}\mpinv x \cr
\end{eqnarray*}
Therefore, if we show that
\begin{equation}
\Pr _{x\sim E_G\T Y} \left[ x\T L_G\mpinv e_{a,b} ~ \cdot ~ e_{a,b}\T L_{G'}\mpinv x > \frac 2 {1-\frac w n}  \epsilon_0 \right] < \delta_0 \label{eq:bad_event}
\end{equation}
then it holds that w.p. $> 1- \delta_0$ we have
\[ \frac {\PDF_{E_G\T Y}(x)} {\PDF_{E_{G'}\T Y}(x)} \geq 1\cdot  \exp\left(-\frac 1 2 x\T (L_G\mpinv-L_{G'}\mpinv)x\right) \geq \exp\left(- \frac {1-\frac w n} 2   x\T L_G\mpinv e_{a,b} ~ \cdot ~ e_{a,b}\T L_{G'}\mpinv x  \right) \geq e^{-\epsilon_0}\]
which proves the lower bound of~\eqref{eq:privacy_LB}. We turn to proving~\eqref{eq:bad_event}.

Denote $term_1 = e_{a,b}\T L_G\mpinv x$ and $term_2 = e_{a,b}\T L_{G'}\mpinv x$. Since $x = E_G\T y$ where $y\sim Y$ then $term_i$ is distributed like $vec_i\T Y$ where $vec_1 = E_G L_G\mpinv e_{a,b}$ and $vec_2 = E_G L_{G'}\mpinv e_{a,b}$. The na\"ive bound, $\|vec_1\| \leq \|E_G\|~\|L_G\mpinv\| \|e_{a,b}\|$ gives a bound on the size of $vec_1$ which is dependent on the ratio $\frac {\sigma_1}{\sigma_{n-1}^2}$. We can improve the bound, on both $\|vec_1\|$ and $\|vec_2\|$, using the SVD of $E_G$ and $E_{G'}$.
\begin{eqnarray*}
& \|vec_1\| & = \|E_G L_G\mpinv e_{a,b}\| = \|V\Sigma U\T U \Sigma^{-2} U\T e_{a,b}\| = \|V \Sigma^{-1} U\T e_{a,b}\| \cr
&& \leq \|V\|~ \|\Sigma^{-1}\| ~\|U\|~ \|e_{a,b}\| = 1\cdot \sigma_{n-1}^{-1} \cdot 1 \cdot \sqrt 2 = \frac {\sqrt 2} {\sqrt w}\cr
& \|vec_2\| & = \|E_G L_{G'}\mpinv e_{a,b}\| = \|(E_{G'} - (1-\frac w n)E_{a,b}) L_{G'}\mpinv e_{a,b}\| < \|E_{G'}L_{G'}\mpinv e_{a,b}\|  + \|E_{a,b}L_{G'}\mpinv e_{a,b}\| \cr
&& \stackrel {(\ast)} \leq \lambda_{n-1}^{-1} \cdot \sqrt 2 + \|E_{a,b}L_{G'}\mpinv e_{a,b}\| \stackrel {(\ast \ast)} = \frac {\sqrt 2}{\sqrt w} + e_{a,b}\T L_{G'}\mpinv e_{a,b} \cr
&& \leq \frac {\sqrt 2} {\sqrt w} + \frac 2 w = \frac {\sqrt 2} {\sqrt w}\left(1+\frac {\sqrt 2} {\sqrt w}\right)
\end{eqnarray*}
where the bound in $(\ast)$ is derived just like in $vec_1$ (using $E_{G'}L_{G'}\mpinv e_{a,b} = V'\Lambda U'{\T} U' \Lambda^{-2} U'{\T}e_{a,b}$) , and the equality in $(\ast \ast)$ follows from the fact that all coordinates in the vector $E_{a,b}L_{G'}\mpinv e_{a,b}$ are zero, except for the coordinate indexed by the $(a,b)$ pair.

We now use the fact that $term_1$ and $term_2$ are both linear combinations of  i.i.d $\mathcal{N}(0,1)$ random variables. Therefore for $i=1,2$ we have that $term_i \sim \mathcal{N} (0, \|vec_i\|^2)$ so $\Pr[|term_i| > \sqrt{\log(2/\delta_0)}\|vec_i\|] \leq e^{-\frac{\|vec_i\|^2\log(2/\delta_0)}{\|vec_i\|^2}} < \frac{\delta_0} 2$. It follows that w.p $>1- \delta_0$ both $|term_1| <\sqrt{\log(2/\delta_0)}\sqrt{\frac 2 w}$ and $|term_2| \leq \sqrt{\log(2/\delta_0)}\sqrt{\frac 4 w}$, so $term_1\cdot term_2 \leq \sqrt 8 \log(2/\delta_0) /w$. Plugging in the value of $w$, we have that $\Pr[term_1\cdot term_2 \leq 2\epsilon_0] \geq 1-\delta_0$ which concludes the proof of~\eqref{eq:bad_event} and of Claim~\ref{clm:inequalities_of_privacy}.
\end{proof}

\ifx\ConferenceVersion\undefined
\subsection{Discussion and Comparison with Other Algorithms}
\label{subsec:comparison_of_algorithsm}

Recently, Gupta et al~\cite{GuptaRU12} have also considered the
problem of answering cut-queries while preserving differential
privacy, examining both an iterative database construction approach
(e.g., based on the multiplicative-weights method) and a 
randomized-response approach.  Here, we compare this and other methods
to our algorithm. 
We compare them along several axes: the dependence on $n$ and $s$
(number of vertices in $G$ and in $S$ resp.), the dependence on
$\epsilon$, and the dependence on $k$ -- the number of
queries answered by the mechanism.  Other parameters are omitted. The bottom line is that for a long
non-adaptive query sequence, our
approach dominates in the case that $s = o(n)$. 
The results are summarized in Table~\ref{tab:mechanisms_comparison}. 

Note, comparing the dependence on $k$ for interactive and non-interactive mechanisms is not straight-forward. In general, non-interactive mechanisms are more desirable than interactive mechanisms, because interactive mechanisms require a central authority that serves as the only way users can interact with the database. However, interactive mechanisms can answer $k$ \emph{adaptively chosen} queries. In order for non-interactive mechanisms to do so, they have to answer correctly on $\min\{\exp(O(k)), 2^n\}$ queries. This is why outputting a sanitized database is often considered a harder task than interactively answering user queries. We therefore compare answering $k$ adaptively chosen queries for interactive mechanisms, and $k$ \emph{predetermined} queries for non-interactive mechanism.

\subsubsection{Our Algorithm}
\label{subsubsec:discussion_our_algorithm}

Clearly, our algorithm is non-interactive. As such, if we wish to
answer correctly w.h.p. a set of $k$ \emph{predetermined} queries, we
set $\nu' = \nu/k$, and deduce that the amount of noise added to each
query is $O(s \sqrt{\log(k)}/\epsilon)$. So, if we wish to
answer all $2^n$ cut queries correctly, our noise is set to $\tilde O(s \sqrt{n} /\epsilon)$. An interesting observation is that in such a case we aim to answer all $2^n$ queries, we generate a iid normal matrix of size $r\times n$ where $r>n$. Therefore, we now apply the JL transform to \emph{increase} the dimensionality of the problem rather than decreasing it. This clearly sets privacy preserving apart from all other applications of the JL transform.

In addition, we comment that our algorithm can be implemented in a \emph{distributed} fashion, where node $i$ repeats the following procedure $r$ times (where $r$ is the number of rows in the matrix picked by Algorithm~\ref{alg:output_perturbed_laplacian}): First, $i$ picks $n-i-1$ iid samples from $\mathcal{N}(0,1)$ and sends the $j$-th sample, $x_j$, to node $i+j$. Once node $i$ receives $i-1$ values from nodes $1,2,\ldots, i-1$, it outputs the weighted sum $\sum_{j \neq i} (-1)^{\{j< i\}}x_{j} \left(\sqrt{\frac w n} + w_{i,j}(1-\sqrt{\frac w n})\right)$ (where $(-1)^{\{j< i\}}$ denotes $-1$ if $j<i$, or $1$ otherwise). 

\subsubsection{Na\"ively Adding Laplace Noise}
\label{subsubsec:lapalacian}

The most basic of all differentially private mechanisms is the classical Laplace mechanism
which is interactive. A user poses a cut-query $S$ and the mechanism replies with $\Phi_G(S) + Lap(0,\epsilon^{-1})$ (since the global sensitivity of cut-queries is $1$). The composition theorem of~\cite{DRV10} assures us that for $k$ queries we preserve $(O(\sqrt{k} \epsilon),\delta)$-privacy. As a result, the mechanism completely obfuscates the true answer if $k \geq n^4$ and even for $k=n^2$ has noise proportional to $n / \epsilon$.

\subsubsection{The Randomized Response Mechanism}
\label{subsubsec:RR}

The ``Randomized Response'' algorithm perturbs the edges of a graph in a way  that allows us to publish the result and still preserve privacy. Given $G$, the Randomized Response algorithm constructs a weighted graph $H$ where for every $u,v\in V(G)$, the weight of the edge $(u,v)$ in $H$, denoted $w_{u,v}'$, is chosen independently to be either $1$ or $-1$. Each edge picks its weight independently, s.t. $\Pr[w'_{u,v} = 1] = \frac {1 + \epsilon w_{u,v}} 2$ and $\Pr[w'_{u,v} = -1] = \frac {1 - \epsilon w_{u,v}} 2$. 
Clearly, this algorithm maintains $\epsilon$-differential edge privacy: two neighboring graphs differ on a single edge, $(a,b)$, and obviously \[\Pr[w'_{a,b} =1 \ | \ w_{a,b} = 1] \leq (1+\epsilon)\Pr[w'_{a,b} =1 \ | \ w_{a,b} = 0]\]
In addition, it is also evident that for every nonempty $S\subsetneq V(G)$, we have that $\E[\sum_{u\in S, v\in \bar S} w'_{u,v}] = \epsilon \sum_{u\in S, v\notin S} w_{u,v} = \epsilon \Phi_G(S)$, yet the variance of this r.v. is $\Omega(s(n-s))$. Therefore, a classical Hoeffding-type bound gives that for any nonempty $S\subsetneq V(G)$ we have that for every $0<\nu<1/2$, \[\Pr\left[\ \left|\frac 1 \epsilon \sum_{u\in S, v\in \bar S} w'_{u,v} - \Phi_G(S) \right| > 
\frac {\sqrt{2\log(1/\nu) s(n-s)}} \epsilon\ \right] \leq 2\nu\]

Observe that while $\sqrt{s(n-s)}$ is a comparable with $s$ when $s =
\Omega(n)$, there are cuts (namely, cuts with $s = O(1)$) where
$\sqrt{\frac {n-s}{s}} = \Omega(\sqrt{n})$.   More generally,
the additive noise of Randomized Response is a factor $\sqrt{n/s}$ worse than our
algorithm. We comment that the Randomized Response
algorithm can also be performed in a distributed fashion, and in
contrast to our algorithm, it has no multiplicative error. In addition,
the above analysis holds for any linear combination of edge, not just
the $s(n-s)$ potential edges that cross the $(S,\bar S)$ cut. So given
$E' \subset E(G)$ it is possible to approximate $\sum_{e\in E'} w_e$
up to $\pm \frac {\sqrt{|E'|\log(1/\nu)}} \epsilon$ w.p. $\geq
1-2\nu$. In particular, for queries regarding an $(S,T)$-cut (where $S,T$ are two disjoint subsets of vertices) we can estimate the error up to $\pm \frac {\sqrt{|S||T|\log(1/\nu)}} \epsilon$. We also comment that the version of Randomized Response presented here differs slightly from the version of~\cite{GuptaRU12}. In particular, it is possible to address their concern regarding outputting a sanitized graph with non-negative weights by an affine transformation taking $\{-1,1\} \to \{0,1\}$.

\subsubsection{Exponential Mechanism / BLR}
\label{subsubsec:BLR}

The exponential mechanism~\cite{mcsherry2007mechanism, blum2008learning} is a non-interactive privacy preserving mechanism, which is typically intractable. To implement it for cut-queries one needs to (a) specify a range of potential outputs and (b) give a scoring function over potential outputs s.t. a good output's score is much higher than all bad outputs' scores.

One such set of potential outputs is derived from edge-sparsifiers. Given a graph $G$ we say that $H$ is an edge-sparsifier for $G$ if for any nonempty $S\subsetneq V(G)$ it holds that $\Phi_H(S) \in (1\pm\eta) \Phi_G(S)$. There's a rich literature on sparsifiers (see~\cite{BenczurK96, SpielmanT04, SpielmanS08}), and the current best known construction~\cite{BatsonSS09} gives a (weighted) sparsifier with $O(n/\eta^2)$ edges with all edge-weights $\leq \poly(n)$. By describing every edge's two endpoints and weight, we have that such edge-sprasifiers can be described using $O(n\log(n))$ bits (omitting dependence on $\eta$). Thus, the set of all sparsifiers is bounded above by $\exp(O(n\log(n)))$. Given an input graph $G$ and a weighted graph $H$, we can score $H$ using $q(G,H) = \max_S \left\{\min_{\alpha:\ |\alpha - 1| \leq \eta} \left| \Phi_H(S)/\alpha - \Phi_G(S)\right|\right\}$. Observe that if we change $G$ to a neighboring graph $G'$, then the score changes by at most $1$.

Putting it all together, we have that given input $G$ the exponential mechanism gives a score of $e^{-\epsilon q(G,H)/2}$ to each possible output. The edge-sparsifier of $G$ gets score of $1$, whereas every graph with $q(G,H) > \tau$ gets a score of $e^{-\epsilon \tau/2}$. So if we wish to claim we output a graph whose error is $>\tau$ w.p. at most $\nu$, then we need to set $\exp(n\log(n) - \epsilon\tau/2) \leq \nu$. It follows that $\tau$ is proportional to $n\log(n)/\epsilon$. Note however that the additive error of this mechanism is independent of the number of queries it answers correctly.

We comment that even though we managed to find a range of size $2^{O(n\log(n))}$, it is possible to show that the range of the mechanism has to be $2^{\Omega(n)}$. (Fix $\alpha < 1/2$ and think of a set of inputs $\mathcal{G}$ where each $G\in\mathcal{G}$ has $n/2$ vertices with degree $n^\alpha$ and $n/2$ vertices with degree $n^{2\alpha}$. Preserving all cuts of size $1$ up to $(1\pm\eta)$ requires our output to have vertices of degree $> (1-\eta)n^{2\alpha}$ and vertices of degree $<(1+\eta)n^{\alpha}$. Therefore, by representing vertices of high- and low-degree using a binary vector, there exists an injective mapping of balanced $\{0,1\}^n$-vectors onto the set of potential outputs.) Thus, unless one can devise a scoring function of lower sensitivity, the exponential mechanism is bounded to have additive error proportional to $n/\epsilon$.

\subsubsection{The Multiplicative Weights Mechanism}
\label{subsubsec:weighted_maj}

The very elegant Multiplicative Weights mechanism of Hardt and Rothblum~\cite{hardt2010multiplicative} can be adapted as well for answering cut queries. In the Multiplicative Weights mechanism, a database is represented by a histogram over all $N$ ``types'' of individuals that exist in a certain universe. In our case, each pair of vertices is a type, and each entry in the database is an edge detailing its weight. Thus, $N = \binom{n}{2}$ and the database length $=|E|$,\footnote{Observe that it is not possible to assume $|E| = O(n)$ using sparsifiers, because sparsifiers output a \emph{weighted} graph with edge-weights $O(n)$. Since the Multiplicative Weights mechanism views the database as a histogram the overall resolution of the problem remains roughly $n^2$ in the worst case.}  and each query $S$ corresponds to taking a dot-product between this histogram the ${\binom n 2}$-length binary vector indicating the edges that cross the cut. Plugging these parameters into the main theorem of~\cite{hardt2010multiplicative}, we get an adaptive mechanism that answers $k$ queries with additive noise of $\tilde O(\sqrt{|E|} \log(k) /\epsilon)$.

We should mention that the Multiplicative Weights mechanism, in contrast to ours, always answers correctly with no multiplicative error and can deal with $k$ adaptively chosen queries. Furthermore, it allows one to answer any linear query on the edges, not just cut-queries and in particular answer $(S,T)$-cut queries. However, its additive error is bigger than ours, and should we choose to set $k = 2^n$ (meaning, answering all cut-queries) then its additive error becomes $\tilde O(n\sqrt{|E|}/\epsilon)$ (in contrast to our $O(s\sqrt{n}/\epsilon)$).

Gupta et al~\cite{GuptaRU12} have improved on the bounds on the Multiplicative Weights mechanism by generalizing it as a ``Iterative Database Construction'' mechanism, and providing a tighter analysis of it. In particular, they have reduced the dependency on $\epsilon$ to $1/\sqrt\epsilon$. Overall, their additive error is $\tilde O(\sqrt{|E|\log(k)}/\sqrt\epsilon)$, which for the case of all cut-queries is $\tilde O(\sqrt{n|E|/\epsilon})$.

\begin{table}[t]
\footnotesize
\centering
\begin{tabular}{| p{2.4cm} |c|c|p{1.0cm}|p{1.0cm}|p{1.0cm}|p{2.5cm}|}
\hline
\centering Method & \parbox{3cm}{\centering Additive Error\\for any $k$} & \parbox{3cm}{\centering Additive Error\\ for all Cuts} & {Multi-plicative Error?} & {Inter-active?} & {Tract-able?} & Comments \cr
\hline
Laplace Noise \cite{Dwork06calibratingnoise}& $O(\sqrt{k}/\epsilon)$ & $O(2^{n/2}\epsilon)$ & \centering\XSolid & \centering\Checkmark & \centering\Checkmark & \cr
\hline
Randomized Response & $O(\sqrt{sn\log(k)}/\epsilon)$ & $O(n\sqrt{s}/\epsilon)$ &\centering\XSolid & \centering\XSolid & \centering\Checkmark & Can be distributed; answers $(S,T)$-cut queries \cr 
\hline
Exponential Mechanism \cite{mcsherry2007mechanism, blum2008learning} & $O(n\log(n)/\epsilon)$ & $O(n\log(n)/\epsilon)$ & \centering\Checkmark &\centering \XSolid & \centering\XSolid  & Error ind. of $k$ \cr
\hline
\parbox{2cm}{MW~\cite{hardt2010multiplicative} \\ IDC~\cite{GuptaRU12}} &  \parbox{3.0cm}{\centering $\tilde O(\sqrt{|E|}\log(k)/\epsilon)$ \\ $\tilde O(\sqrt{|E|\log(k)/ \epsilon})$ } & 
\centering \parbox{2cm}{\centering $\tilde O(n\sqrt{|E|}/\epsilon)$ \\ $\tilde O(\sqrt{n|E|/ \epsilon})$} &\centering{\XSolid} & \centering{\Checkmark} &\centering{\Checkmark} & {Answers $(S,T)$-cut queries} \cr 
\hline
\hline
JL & $O(s\sqrt{\log(k)} /\epsilon)$ & $\tilde O(s\sqrt{n})/\epsilon)$ & \centering\Checkmark & \centering\XSolid & \centering\Checkmark & Can be distributed\cr 
\hline
\end{tabular}
\caption{Comparison between mechanisms for answering cut-queries. $\epsilon$ -- privacy parameter; $n$ and $|E|$  -- number of vertices and edges resp.; $s$ -- number of vertices in a query; $k$ -- number of queries.\label{tab:mechanisms_comparison}}
\end{table}

\else
\fi

\section{Publishing a Covariance Matrix}
\label{sec:covariance_matrix}

\ifx\ConferenceVersion\undefined
\subsection{The Algorithm}
\label{subsec:algorithm_covariance}
\else
\fi

In this section, we are concerned with the question of allowing users to estimate the covariance of a given sample data along an arbitrary direction $x$. 
We think of our input as a $n\times d$ matrix $A$, and we maintain privacy w.r.t to changing the coordinates of a single row s.t. a vector $v$ of size $1$ is added to $A_{(i)}$. We now detail our algorithm for publishing the covariance matrix of $A$. Observe that in addition to the variance, we can output $\mu = \frac 1 n A\T \1$, the mean of all samples in $A$, in a differentially private manner by adding random Gaussian noise. (We merely output $\tilde \mu = \mu + \mathcal{N}(0, \frac {4\log(1/\delta)} {n^2\epsilon^2} I_{d\times d})$.) We denote by $I_{n\times d}$ the $n\times d$ matrix whose main diagonal has $1$ in each coordinate and all other coordinates are $0$.
\ifx\ConferenceVersion\undefined
\else
We detail the algorithms here, but prove their privacy and utility in Appendix~\ref{sec:apx_covariance_completions}, along with comparing Algorithm~\ref{alg:perturbed_covariance} to existing algorithms.
\fi

\begin{algorithm}[H]
\DontPrintSemicolon
\KwIn{A $n\times d$ matrix $A$. Parameters $\epsilon,\delta,\eta,\nu >0$.}
\BlankLine
Set $r = \frac {8\ln(2/\nu)} {\eta^2}$ and $w = \frac {16\sqrt{r\ln(2/\delta)}} \epsilon \ln(16r/\delta)$.\;
Subtract the mean from $A$ by computing $A \leftarrow A - \frac 1 n\1\1^T A$.\;
Compute the SVD of $A = U\Sigma V\T$.\;
Set $A \leftarrow U(\sqrt{\Sigma^2 + w^2I_{n\times d}}) V\T$.\;
Pick a matrix $M$ of size $r\times n$ whose entries are iid samples of $\mathcal{N}(0,1)$.\;
\KwRet{$\tilde C = \frac 1 r A\T M\T M A$}.
\caption{Outputting a Covariance Matrix while Preserving Differential Privacy\label{alg:perturbed_covariance}}
\end{algorithm}

\begin{algorithm}[H]
\DontPrintSemicolon
\LinesNotNumbered
\KwIn{A unit-length vector $x$, parameter $w$ and a Covariance matrix $\tilde C$ from Algorithm~\ref{alg:perturbed_covariance}.}
\BlankLine
\KwRet{$R(x) = x\T \tilde C x - w^2$}.
\caption{Approximating $\Phi_A(x)$\label{alg:approx_directional_variance}}
\end{algorithm}

\begin{theorem}
\label{thm:perturbed_covariance_preserves_privacy}
Algorithm~\ref{alg:perturbed_covariance} preserves $(\epsilon,\delta)$-differential privacy.
\end{theorem}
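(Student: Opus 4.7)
The plan is to adapt the proof template of Theorem~\ref{thm:perturbed_laplacian_preserves_privacy} to the general matrix setting. First I would reduce privacy of $\tilde C$ to privacy of the intermediate matrix $O = MA$ (of which $\tilde C = \frac{1}{r} O\T O$ is a deterministic function), decompose $O$ into $r$ iid rows of the form $Y\T A$ for $Y\sim \mathcal{N}(0,I_n)$, and invoke the composition theorem of~\cite{DRV10} to reduce to showing $(\epsilon_0,\delta_0)$-differential privacy of a single row, with $\epsilon_0 = \Theta(\epsilon/\sqrt{r\ln(1/\delta)})$ and $\delta_0 = \delta/(2r)$. A neighbor of $A$ can be written as $A' = A + e_i u\T$ with $\|u\| \leq 1$, so the rival distributions are $\mathcal{N}(0, B)$ and $\mathcal{N}(0, B')$ where $B = A\T A$ and, writing $v = A\T e_i$, we get $B' = B + v u\T + u v\T + u u\T = B + (v+u)(v+u)\T - v v\T$. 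In particular $B' - B$ has rank at most $2$, with eigenvalues of mixed sign.

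The SVD shift in Algorithm~\ref{alg:perturbed_covariance} guarantees that all singular values of $A$ are at least $w$, so $B \succeq w^2 I_{d\times d}$; and by Weyl's perturbation inequality for singular values (since $\|A'-A\|_{\mathrm{op}} \leq 1$) we also get $B' \succeq (w-1)^2 I$. Thus $B$ and $B'$ are both positive definite with $\|B^{-1}\|, \|B'^{-1}\| = O(1/w^2)$, the densities are honest functions on $\mathbb{R}^d$, and no pseudo-determinant or Moore--Penrose machinery is required. I would then write $\frac{\PDF_{A\T Y}(x)}{\PDF_{A'\T Y}(x)} = \sqrt{\det(B')/\det(B)} \cdot \exp\bigl(\tfrac{1}{2} x\T (B'^{-1} - B^{-1}) x\bigr)$ and bound the two factors separately.

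For the determinant ratio, the matrix $B^{-1/2}(B' - B) B^{-1/2}$ is rank $2$, with eigenvalues $\lambda_1 \geq 0 \geq \lambda_2$ satisfying $\lambda_1 + \lambda_2 = 2 u\T B^{-1} v + u\T B^{-1} u$ and $\lambda_1\lambda_2 = (u\T B^{-1} v)^2 - (v\T B^{-1} v)(u\T B^{-1} u) \leq 0$ by Cauchy--Schwarz in the $B^{-1}$-inner product. Using the projection identity $v\T B^{-1} v = e_i\T A (A\T A)^{-1} A\T e_i \leq 1$ together with $u\T B^{-1} u \leq \|u\|^2/w^2 \leq 1/w^2$, both $|\lambda_1 + \lambda_2|$ and $|\lambda_1 \lambda_2|$ are $O(1/w)$ and $O(1/w^2)$ respectively. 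Hence $\det(B')/\det(B) = (1+\lambda_1)(1+\lambda_2) = 1 + O(1/w) \leq e^{\epsilon_0}$ after plugging in $w$. For the exponent, the Woodbury identity gives $B^{-1} - B'^{-1} = B^{-1}(B' - B) B'^{-1}$, so $x\T(B^{-1} - B'^{-1}) x$ splits into three bilinear forms $(x\T a)(x\T b)$ for fixed vectors $a, b \in \{B^{-1} u,\ B^{-1} v,\ B'^{-1} u,\ B'^{-1} v\}$. For $x = A\T Y$, each $x\T c$ is Gaussian with variance $\|A c\|^2 = c\T B c$; the four variances we need are at most $1$, $1/w^2$, $O(1)$, and $O(1/w^2)$ respectively, where the $B'^{-1}$-side bounds are obtained by writing $A = A' - e_i u\T$ and combining the projection bound $v'\T B'^{-1} v' \leq 1$ (with $v' = A'\T e_i$) with Cauchy--Schwarz in the $B'^{-1}$-inner product. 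Gaussian concentration and a union bound then give $|x\T(B^{-1} - B'^{-1}) x| = O(\log(1/\delta_0)/w) \leq 2\epsilon_0$ with probability $\geq 1 - \delta_0$.

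The main obstacle, relative to the graph case, is precisely the lack of a PSD ordering between $B$ and $B'$: the rank-$2$ perturbation has one positive and one negative eigenvalue, so the clean one-sided argument of Theorem~\ref{thm:perturbed_laplacian_preserves_privacy} (where $L_G \preceq L_{G'}$ automatically delivered the ``always'' upper bound on the exponent) is unavailable, and both directions of the $\PDF$ inequality must be proved symmetrically via high-probability bounds. A related technical nuisance is that a na\"ive spectral-norm bound on $\|A B'^{-1} c\|$ would cost a factor of $\sigma_1(A)/w^2$; this must be avoided by the triangle-inequality decomposition $A = A' - e_i u\T$ and by the projection-plus-Cauchy--Schwarz recipe above, which exploits the singular-value lower bound $\sigma_d(A') \geq w-1$ inherited from the SVD shift.
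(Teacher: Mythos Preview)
Your high-level plan matches the paper's: reduce to a single row $A\T Y$, invoke the composition theorem of~\cite{DRV10}, and bound the determinant ratio and the quadratic form $x\T(B^{-1}-B'^{-1})x$ separately. The gap is in how you treat Step~4 of Algorithm~\ref{alg:perturbed_covariance}. You assert both that ``the SVD shift guarantees all singular values of $A$ are at least $w$'' and that ``$A' = A + e_i u\T$''; these two statements are incompatible. The shift sends the input $A_0$ to $B = U\sqrt{\Sigma^2+w^2 I}\,V\T$, and the neighboring input $A_0' = A_0 + e_i u\T$ to $B' = U'\sqrt{\Lambda^2+w^2 I}\,{V'}\T$. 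The difference $B'-B$ is \emph{not} $e_i u\T$ and in general is not rank-$1$ at all, since the shift acts nonlinearly through the singular vectors. Consequently your decomposition of the Gram-matrix difference with $v = A\T e_i$, your Weyl bound $\sigma_d(A') \geq w-1$, and your ``write $A = A' - e_i u\T$'' device for the $B'^{-1}$-side variances are all unjustified as written.

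The paper handles exactly this issue in a separate ``Stage~2'' via the identity $B\T B = A_0\T A_0 + w^2 I$ (and likewise for $B'$), which yields ${B'}\T B' - B\T B = {A_0'}\T A_0' - A_0\T A_0 = {A_0'}\T E + E\T A_0$: the \emph{Gram}-matrix perturbation is still rank~$2$ with the explicit structure you want, provided you take $v = A_0\T e_i$ from the \emph{unshifted} matrix. With that correction your determinant argument goes through essentially unchanged --- in fact your rank-$2$ eigenvalue computation together with the bound $e_i\T A_0(A_0\T A_0 + w^2 I)^{-1}A_0\T e_i \leq 1$ is arguably cleaner than the paper's appeal to Lindskii's theorem (Fact~\ref{fact:Linskii_thm}). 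The exponent bound needs more care, though: now $x = B\T Y$ involves the shifted $B$ while $v,u$ come from the unshifted $A_0,A_0'$, so your triangle-inequality decomposition $A = A' - e_i u\T$ is no longer available; the paper instead derives $\|Bz\| \leq (1+1/w)\|B'z\|$ directly from the Gram-matrix identity and uses that to transfer the $B'^{-1}$-side variance bounds back to $B$.
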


\begin{theorem}
\label{thm:utility_directional_variance}
Algorithm~\ref{alg:approx_directional_variance} is a $(\eta,\tau,\nu)$-approximation for directional variance queries, where $\tau = O\left( \frac {\ln(1/\delta)\ln(1/\nu)} {\epsilon^2\eta} \ln^2\left( \frac {\ln(1/\nu)}{\delta \eta^2} \right) \right)$.
\end{theorem}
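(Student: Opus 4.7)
The plan is to mirror the proof of Theorem~\ref{thm:approx_cut}: invoke the Johnson--Lindenstrauss lemma on a suitable vector and translate the multiplicative guarantee into the stated $(\eta,\tau,\nu)$-approximation. Let $A_c$ denote the centered matrix produced at line~2 of Algorithm~\ref{alg:perturbed_covariance} and let $B$ denote the matrix produced at line~4. The one algebraic observation driving everything is that line~4 shifts singular values via $\sigma_i \mapsto \sqrt{\sigma_i^2+w^2}$ while keeping the same left/right singular vectors. Hence
\[
B^T B \;=\; V\bigl(\Sigma^T\Sigma + w^2 I_d\bigr) V^T \;=\; A_c^T A_c + w^2 I_d,
\]
so for every unit vector $x$ we have $\|Bx\|^2 = x^T B^T B x = \Phi_A(x) + w^2$ (with $\Phi_A$ understood on the centered matrix, per the algorithm's convention).

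Next I would apply Theorem~\ref{thm:JL} to the single vector $Bx\in\mathbb{R}^n$ using the $r\times n$ matrix $M$ drawn in line~5. Since $r = 8\ln(2/\nu)/\eta^2$, the JL lemma guarantees that with probability at least $1-\nu$,
\[
(1-\eta)\,\|Bx\|^2 \;\leq\; \tfrac{1}{r}\|MBx\|^2 \;\leq\; (1+\eta)\,\|Bx\|^2.
\]
Observing that $\tfrac{1}{r}\|MBx\|^2 = x^T \tilde C x$, substituting $\|Bx\|^2 = \Phi_A(x)+w^2$, and then subtracting the $w^2$ term (exactly what Algorithm~\ref{alg:approx_directional_variance} does), I arrive at
\[
(1-\eta)\Phi_A(x) - \eta w^2 \;\leq\; R(x) \;\leq\; (1+\eta)\Phi_A(x) + \eta w^2,
\]
so $R(x)$ is a $(\eta,\eta w^2,\nu)$-approximation to $\Phi_A(x)$.

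The last step is a bookkeeping calculation: plug in $w = \tfrac{16\sqrt{r\ln(2/\delta)}}{\epsilon}\ln(16r/\delta)$ with $r = \Theta(\ln(1/\nu)/\eta^2)$ and simplify $\eta w^2$ to
\[
\eta w^2 \;=\; O\!\left(\frac{\ln(1/\delta)\ln(1/\nu)}{\epsilon^2\eta}\,\ln^2\!\left(\frac{\ln(1/\nu)}{\delta\eta^2}\right)\right),
\]
matching the $\tau$ claimed in the theorem. No step here is genuinely hard: the algebraic identity $B^T B = A_c^T A_c + w^2 I_d$ is exactly what motivates the singular-value shift in line~4, and once that identity is in hand the utility claim is a direct corollary of JL. The only mild subtlety I would check carefully is the shape convention for $I_{n\times d}$ when $n<d$ (so that the added $w^2$ fills every diagonal entry of the $d\times d$ Gram matrix), but under the implicit assumption $n\geq d$ used throughout the covariance setting this reduces to plain matrix algebra.
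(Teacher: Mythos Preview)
Your proposal is correct and matches the paper's own proof, which simply states that the result is immediate from the JL lemma and ``straight-forward arithmetics'' yielding $\tau = \eta w^2$. You have in fact spelled out more detail than the paper does (the identity $B^TB = A_c^TA_c + w^2 I_d$ and the final $\eta w^2$ expansion), but the approach is identical.
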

\ifx\ConferenceVersion\undefined

\begin{proof}[Proof of Theorem~\ref{thm:utility_directional_variance}]
Again, the proof is immediate from the JL Lemma, and straight-forward arithmetics give that for every $x$ w.p. $\geq 1-\nu$ we have that
\[(1-\eta) \Phi_A(x) - \eta w^2 \leq R(x) \leq (1+\eta)\Phi_A(x) + \eta w^2\] so $\tau = \eta w^2$.
\end{proof}

\else
\fi

\paragraph{Comment.} We wish to clarify that Theorem~\ref{thm:utility_directional_variance} does \emph{not} mean that we publish a matrix $\tilde C$ which is a low-rank approximation to $A\T A$. It is also not a matrix on which one can compute an approximated PCA of $A$, \emph{even if we set} $\nu = 1/\poly(d)$. The matrix $\tilde C$ should be thought of as a ``test-matrix'' -- if you believe $A$ has high directional variance along some direction $x$ then you can test your hypothesis on $\tilde C$ and (w.h.p) get the good approximated answer. However, we do not guarantee that the singular values of $A\T A$ and of $\tilde C$ are close or that the eigenvectors of $A\T A$ and $\tilde C$ are comparable. (See discussion in Section~\ref{sec:open_problems}.)

\ifx\ConferenceVersion\undefined
\begin{proof}[Proof of Theorem~\ref{thm:perturbed_covariance_preserves_privacy}]
Fix two neighboring $A$ and $A'$. We often refer to the gap matrix $A'-A$ as $E$. Observe, $E$ is a rank-$1$ matrix, which we denote as the outer-product $E = e_i v\T$ ($e_i$ is the indicator vector of row $i$ and $v$ is a vector of norm $1$). As such, the singular values of $E$ are exactly $\{1,0,\ldots, 0\}$.\footnote{For convenience, we ignore the part of the algorithm that subtracts the mean of the rows of $A$. Observe that if $E = A-A'$ then after subtracting the mean from each row, the difference between the two matrices is $\tilde{e_i}\T v$ where $\tilde{e_i}$ is simply subtracting $1/n$ from each coordinate of $e_i$. Since $\|\tilde{e_i}\| < \|e_i\|$, this has no effect on the analysis.}

The proof of the theorem is composed of two stages. The first stage is the simpler one. We ignore step $4$ of Algorithm~\ref{alg:perturbed_covariance} (shifting the singular values), and work under the premise that both $A$ and $A'$ have singular values no less than $w$. In the second stage we denote $B$ and $B'$ as the results of applying step $4$ to $A$ and $A'$ resp., and show what adaptations are needed to make the proof follow through. 

\paragraph{Stage $1$.}

\noindent We assume step $4$ was not applied, and all singular values of $A$ and $A'$ are at least $w$.
 
As in the proof of Theorem~\ref{thm:perturbed_laplacian_preserves_privacy}, the proof follows from the assumption that Algorithm~\ref{alg:perturbed_covariance} outputs $O\T = A\T M$ (which clearly allows us to reconstruct $\tilde C = \frac 1 r O\T O$). Again $O\T$ is composed of $r$ columns each is an iid sample from $A\T Y$ where $Y\sim \mathcal{N}(0, I_{n\times n})$. We now give the analogous claim to Claim~\ref{clm:inequalities_of_privacy}.
\begin{claim}
\label{clm:inequalities_of_privacy_for_covariance}
Fix $\epsilon_0 = \frac {\epsilon} {\sqrt{4r\ln(2/\delta)}}$ and $\delta_0 = \frac \delta {2r}$. Denote $S = \{x:\ e^{-\epsilon_0}{\PDF_{{A'}\T Y}(x)} \leq PDF_{A\T Y}(x) \leq e^{\epsilon_0}\PDF_{{A'}\T Y}(x)\}$. Then $\Pr[S]\geq 1-\delta_0$.
\end{claim}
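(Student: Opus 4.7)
The plan is to follow the template of the proof of Claim~\ref{clm:inequalities_of_privacy}, with two modifications reflecting the fact that the neighboring covariance matrices $C = A^T A$ and $C' = A'^T A'$ admit no PSD ordering and that $C' - C$ has rank $2$ rather than rank $1$. Setting $u = A^T e_i$ and using $A' - A = e_i v^T$ with $\|v\| \leq 1$, I would write
\[
C' - C = u v^T + v u^T + v v^T = P N P^T, \qquad P = [u,\,v],\ \ N = \begin{pmatrix} 0 & 1 \\ 1 & 1 \end{pmatrix}.
\]
Since neither $C \preceq C'$ nor $C' \preceq C$ holds, both sides of the PDF ratio must be controlled with a high-probability argument; it suffices to exhibit a single event $S$ on which both inequalities hold and show $\Pr[S] \geq 1 - \delta_0$.

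Write the log-ratio as $\log\tfrac{\PDF_{A^T Y}(x)}{\PDF_{A'^T Y}(x)} = \tfrac12\log\tfrac{\det C'}{\det C} - \tfrac12\, x^T(C^{-1}-C'^{-1})x$ and bound each term. For the log-determinant, the matrix-determinant lemma gives $\det(C')/\det(C) = \det(I_2 + N P^T C^{-1} P)$. Step~4 of Algorithm~\ref{alg:perturbed_covariance} guarantees $\sigma_{\min}(A) \geq w$, which bounds the three relevant scalars: $a = u^T C^{-1} u = \|P_A e_i\|^2 \leq 1$, $b = v^T C^{-1} v \leq 1/w^2$, and $|c| = |u^T C^{-1} v| \leq \sqrt{ab} \leq 1/w$. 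Expanding the $2\times 2$ determinant yields $\det(C')/\det(C) = 1 + 2c + b + c^2 - ab = 1 + O(1/w)$, so this term contributes $O(1/w) \leq O(\epsilon_0)$ to the log-ratio.

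For the quadratic form, I would apply the Sherman--Morrison--Woodbury identity,
\[
C^{-1} - C'^{-1} = C^{-1} P \bigl(N^{-1} + P^T C^{-1} P\bigr)^{-1} P^T C^{-1},
\]
to rewrite $x^T(C^{-1}-C'^{-1})x = z^T M z$ where $z = P^T C^{-1} x \in \mathbb{R}^2$ and $M = (N^{-1} + P^T C^{-1} P)^{-1}$. A direct calculation shows that, in the small-$a,b,c$ regime enforced by Step~4, one has $M_{12} = 1 + O(1/w)$ while $M_{11} = O(1/w^2)$ and $M_{22} = O(1)$. For $x = A^T y$ with $y \sim \mathcal{N}(0, I)$, the two components of $z$ are Gaussian linear forms in $y$: $z_1 = (P_A e_i)^T y \sim \mathcal{N}(0, a)$ has variance at most $1$, and $z_2 = (A(A^T A)^{-1} v)^T y \sim \mathcal{N}(0, b)$ has variance at most $1/w^2$. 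Standard Gaussian tail bounds then yield, except on an event of probability $\leq \delta_0$, the simultaneous bounds $|z_1| \leq O(\sqrt{\log(1/\delta_0)})$ and $|z_2| \leq O(\sqrt{\log(1/\delta_0)}/w)$, so the cross term $2M_{12} z_1 z_2 = O(\log(1/\delta_0)/w) \leq O(\epsilon_0)$ dominates $z^T M z$, while the diagonal contributions are bounded by $O(\log(1/\delta_0)/w^2)$.

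The main obstacle I expect is the asymmetry between $z_1$ and $z_2$: unlike the graph case, where the analogues of both $\|vec_1\|$ and $\|vec_2\|$ were bounded by $O(1/\sqrt{w})$, here $z_1$ only has $O(1)$ variance, a factor of $w$ worse. The argument is rescued by the rank-$2$ structure: the dominant contribution to the quadratic form is an off-diagonal cross term that pairs the ``bad'' $z_1$ with the ``good'' $z_2$, and the resulting $1/w$ factor suffices. Carefully verifying that $M$ is well-defined (the determinant $(a-1)b - (c+1)^2$ stays bounded away from zero, approximately $-1$) and quantifying the off-diagonal dominance over the diagonal $M_{11} z_1^2, M_{22} z_2^2$ terms is the technical heart of the proof. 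The remainder of Theorem~\ref{thm:perturbed_covariance_preserves_privacy}---composing the $r$ i.i.d.\ rows via~\cite{DRV10}, and extending from the Stage~1 case of $\sigma_{\min}(A) \geq w$ to the general Stage~2 case---parallels the graph case.
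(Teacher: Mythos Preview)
Your proposal is correct and takes a genuinely different route from the paper. For the determinant ratio, the paper invokes Lidskii's inequality (Fact~\ref{fact:Linskii_thm}) to control $\sum_{i\in Big}(\lambda_i-\sigma_i)$ and $\sum_{i\notin Big}(\sigma_i-\lambda_i)$ separately, whereas you reduce via the matrix-determinant lemma to an explicit $2\times 2$ determinant in the scalars $a,b,c$. For the quadratic form, the paper uses the \emph{asymmetric} identity $C^{-1}-C'^{-1}=C^{-1}(C'-C)C'^{-1}$, expands $C'-C=A^{\T}E+E^{\T}A'$ into two rank-$1$ pieces, and obtains four Gaussian terms $term_1\cdot term_2 + term_3\cdot term_4$ whose defining vectors involve both $(A^{\T}A)^{-1}$ and $(A'^{\T}A')^{-1}$. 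Your Woodbury route keeps everything on the $C$ side and packages the quadratic form as $z^{\T}Mz$ with $z\in\mathbb{R}^2$; the structural fact that the large-variance coordinate $z_1$ meets $z_2$ only through the off-diagonal $M_{12}$, while $M_{11}=O(1/w^2)$ suppresses $z_1^2$, plays exactly the role of the paper's pairing of $term_1$ (norm $1$) with $term_2$ (norm $O(1/w)$) and of $term_3$ (norm $O(1/w)$) with $term_4$ (norm $O(1)$). What your approach buys is a more elementary determinant bound (no Lidskii) and, because only $C^{-1}$ appears, a cleaner Stage~2: since $B'^{\T}B'-B^{\T}B = A'^{\T}A'-A^{\T}A = PNP^{\T}$ is unchanged by the singular-value shift, and $a'=u^{\T}(B^{\T}B)^{-1}u\le 1$, $b'=v^{\T}(B^{\T}B)^{-1}v\le 1/w^2$ follow immediately from $B^{\T}B\succeq w^2 I$ and the spectral bound on $A(B^{\T}B)^{-1}A^{\T}$, you sidestep the paper's final $\|Bz\|\le(1+1/w)\|B'z\|$ maneuver entirely. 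The paper's approach, in turn, is more hands-on and reuses verbatim the template it already set up in Claim~\ref{clm:inequalities_of_privacy}.
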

Again, the composition theorem of~\cite{DRV10} along with the choice of $r$ gives that overall we preserve $(\epsilon,\delta)$-differential privacy.
\end{proof}

\begin{proof}[Proof of Claim~\ref{clm:inequalities_of_privacy_for_covariance}]
The proof mimics the proof of Claim~\ref{clm:inequalities_of_privacy}, but there are two subtle differences. First, the problem is simpler notation-wise, because $A$ and $A'$ both have full rank due to Algorithm~\ref{alg:perturbed_covariance}. Secondly, the problem becomes more complicated and requires we use some heavier machinery, because the singular values of $A'$ aren't necessarily bigger than the singular values of $A$. Details follow.

First, let us formally define the $\PDF$ of the two distributions. Again, we apply the fact that $A\T Y$ and ${A'}\T Y$ are linear transformations of $\mathcal{N}(0, I_{n\times n})$.
\begin{eqnarray*}
&&\PDF_{A\T Y}(x) = \frac 1 {\sqrt{(2\pi)^{d} \det(A\T A)}} \exp(-\frac 12 x\T (A\T A)^{-1} x) \cr
&&\PDF_{{A'}\T Y}(x) =  \frac 1 {\sqrt{(2\pi)^{d} \det({A'}\T A')}} \exp(-\frac 12 x\T ({A'}\T A')^{-1} x)\cr
\end{eqnarray*}

Our proof proceeds as follows. First, we show
\begin{equation}
\label{eq:det_ratio}
e^{-\epsilon_0/2} \leq \sqrt{\frac{\det({A'}\T{A'})}{\det(A\T A)}} \leq e^{\epsilon_0 / 2}
\end{equation}
Then we show that no matter whether we sample $x$ from $A\T Y$ or from ${A'}\T Y$, we have that
\begin{equation}
\label{eq:exp_ratio}
\Pr_x\left[ \frac 1 2 \left| x\T \left( (A\T A)^{-1} - ({A'}\T {A'})^{-1}\right) x\right| \geq \epsilon_0/2 \right] \leq \delta_0
\end{equation}
Clearly, combining both~\eqref{eq:det_ratio} and~\eqref{eq:exp_ratio} proves the claim.

Let us prove~\eqref{eq:det_ratio}. Denote the SVD of $A = U\Sigma V\T$ and $A' = U' \Lambda {V'}\T$, where the singular values of $A$ are $\sigma_1 \geq \sigma_2 \geq \ldots \geq \sigma_d > 0$ and the singular values of $A'$ are $\lambda_1 \geq \lambda_2 \geq \ldots \geq \lambda_d > 0$. Therefore we have $A\T A = V \Sigma^2 V\T$, ${A'}\T A' = V' \Lambda^2 {V'}\T$ and also $(A\T A)^{-1} = V \Sigma^{-2} V\T$, $({A'}\T A')^{-1} = V' \Lambda^{-2} {V'}\T$. Thus $\det(A\T A) = \prod_{i=1}^d \sigma_i^2$ and $\det({A'}\T A') = \prod_{i=1}^d \lambda_i^2$. 

This time, in order to bound the gap $\sum_i (\lambda_i^2 - \sigma_i^2)/\sigma_i^2$ it isn't sufficient to use the trace of the matrices. Instead, we invoke an application of Lindskii's theorem (Theorem 9.4 in~\cite{Bhatia07}).
\begin{fact}[Linskii]
\label{fact:Linskii_thm}
For every $k$ and every $1\leq i_1 < i_2 < \ldots < i_k\leq n$ we have that \[\sum_{j=1}^k \lambda_{i_j} \leq \sum_{j=1}^k \sigma_{i_j} + \sum_{i=1}^k \textrm{sv}_i(E)\] where $\{\textrm{sv}_i(E)\}_{i=1}^n$ are the singular values of $E$ sorted in a descending order.
\end{fact}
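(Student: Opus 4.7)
The plan is to prove this classical singular-value inequality by reducing it to the corresponding eigenvalue statement for Hermitian matrices and then deriving the Hermitian version from Wielandt's minimax principle. First I would form the Jordan--Wielandt dilations $\tilde M = \begin{pmatrix} 0 & M \\ M\T & 0 \end{pmatrix}$ for $M \in \{A, A', E\}$. Each $\tilde M$ is Hermitian, its nonzero eigenvalues are precisely $\pm \sigma_i(M)$, and crucially $\tilde{A'} = \tilde A + \tilde E$. Hence the stated singular-value bound for $A' = A + E$ is equivalent to the same bound for the top positive eigenvalues of the Hermitian identity $\tilde{A'} = \tilde A + \tilde E$.

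Next I would invoke Wielandt's minimax characterization for Hermitian $H$: for arbitrary indices $i_1 < \ldots < i_k$,
\[
\sum_{j=1}^k \mu_{i_j}(H) \;=\; \min_{V_1 \supset \ldots \supset V_k} \; \max_{\substack{u_1,\ldots,u_k\ \text{orthonormal}\\ u_j \in V_j}} \; \sum_{j=1}^k \langle H u_j, u_j\rangle,
\]
where the outer min runs over nested subspaces with $\dim V_j = n - i_j + 1$. Fix a flag $(V_j)$ achieving this minimum for $H = \tilde A$. Using this same (suboptimal) flag for $\tilde{A'}$ gives
\[
\sum_{j=1}^k \mu_{i_j}(\tilde{A'}) \;\leq\; \max_{\{u_j\}} \sum_{j=1}^k \langle \tilde A u_j, u_j\rangle \;+\; \max_{\{u_j\}} \sum_{j=1}^k \langle \tilde E u_j, u_j\rangle,
\]
by sub-additivity of the max (both maxes over orthonormal tuples in the flag). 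The first max equals $\sum_j \mu_{i_j}(\tilde A) = \sum_j \sigma_{i_j}(A)$ by our choice of flag, while the second is at most the \emph{unconstrained} top-$k$ maximum of $\sum_j \langle \tilde E u_j, u_j\rangle$, which by Ky Fan's maximum principle equals $\sum_{j=1}^k \mu_j(\tilde E) = \sum_{j=1}^k \mathrm{sv}_j(E)$. Combining yields exactly the claimed inequality.

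The main obstacle is establishing Wielandt's minimax characterization itself for arbitrary index sets --- once that tool is in hand, the rest is a one-line combination of sub-additivity of maxima and the Ky Fan inequality. Wielandt's principle is typically proved either by an explicit flag construction paired with a dimension-counting argument that matches upper and lower bounds, or inductively via Cauchy interlacing on principal submatrices; either route is the one genuinely non-trivial ingredient in the whole argument.
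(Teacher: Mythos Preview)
Your proposal is correct and follows essentially the same route as the paper: both reduce to the Hermitian eigenvalue version via the Jordan--Wielandt dilation $\begin{pmatrix}0 & M\\ M\T & 0\end{pmatrix}$, then derive the eigenvalue inequality from Wielandt's minimax principle together with the Ky Fan bound $\sum_j \langle E u_j, u_j\rangle \leq \sum_{i=1}^k ev_i(E)$ for orthonormal tuples. The only cosmetic difference is that the paper works with the $\max$--$\min$ form of Wielandt (fixing the optimal flag for $B=\tilde{A'}$ and bounding the inner $\min$), whereas you use the dual $\min$--$\max$ form (fixing the optimal flag for $\tilde A$ and bounding the inner $\max$); the two arguments are mirror images of each other.
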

As a corollary, because $E$ has only $1$ non-zero singular value, we denote $Big = \{i : \ \lambda_i > \sigma_i \}$ and deduce that $\sum_{i\in Big} \lambda_i - \sigma_i \leq 1$. Similarly, since the singular values of $E$ and of $(-E)$ are the same, we have that $\sum_{i\notin Big} \sigma_i - \lambda_i \leq 1$. Using this, proving~\eqref{eq:det_ratio} is straight-forward:
\begin{eqnarray*}
& \sqrt{\prod_i \frac {\lambda_i^2} {\sigma_i^2}} & \leq \prod_{i\in Big} \left(1 + \frac {\lambda_i - \sigma_i} {\sigma_i}\right) \leq \exp\left(\frac 1 { w}\sum_{i\in Big} \lambda_i - \sigma_i\right) \leq e^{w^{-1}} \leq e^{\epsilon_0/2} 
\end{eqnarray*}
and similarly, $\sqrt{\prod_i \frac {\sigma_i^2}{\lambda_i^2} }\leq e^{\epsilon_0/2}$.

We turn to proving~\eqref{eq:exp_ratio}. We start with the following derivation.
\begin{eqnarray*}
& x\T (A\T A)^{-1} x - x\T ({A'}\T {A'})^{-1} x &= x\T (A\T A)^{-1} ({A'}\T {A'}) ({A'}\T {A'})^{-1} x - x\T ({A'}\T {A'})^{-1} x = \cr 
&& = x\T (A\T A)^{-1} ({(A+E)}\T {(A+E)}) ({A'}\T {A'})^{-1} x - x\T ({A'}\T {A'})^{-1} x \cr 
&& = x\T (A\T A)^{-1} (A\T E + E\T A') ({A'}\T {A'})^{-1} x 
\end{eqnarray*}
and using the SVD and denoting $E = e_i v\T$, we get
\begin{eqnarray*}
& x\T (A\T A)^{-1} x - x\T ({A'}\T {A'})^{-1} x &= x\T \left(V\Sigma^{-1} U\T \right)e_i \cdot v\T \left({V'}\Lambda^{-2} {V'}\T\right)x \cr 
&&+ x\T \left(V\Sigma^{-2} V\T \right)v \cdot e_i\T \left({U'}\Lambda^{-1} {V'}\T\right)x
\end{eqnarray*}
So now, assume $x$ is sampled from $A\T Y$. (The case of ${A'}\T Y$ is symmetric. In fact, the names $A$ and $A'$ are interchangeable.) That is, assume we've sampled $y$ from $Y \sim \mathcal{N}(0, I_{n\times n})$ and we have $x = A\T y = V \Sigma U\T y$ and equivalently $x = ({A'}\T - E\T) y = V' \Lambda {U'}\T y - v e_i\T y$. The above calculation shows that 
\[ \left| x\T (A\T A)^{-1} x - x\T ({A'}\T {A'})^{-1} x \right| \leq term_1 \cdot term_2 + term_3\cdot term_4 \]
where for $i=1,2,3,4$ we have $term_i = |vec_i \cdot y|$ and
\begin{align*}
&vec_1 = U\Sigma V\T V \Sigma^{-1} U e_i = e_i , 
&&\textrm{ so } \|vec_1\| = 1 & \\ 
&vec_2 = U' \Lambda^{-1} {V'}\T v - e_i v\T V' \Lambda^{-2} {V'}\T v,
&&\textrm{ so } \|vec_2\| \leq \frac 1 {\lambda_d} +\frac 1 {\lambda_d^2}  & \\
&vec_3 = U\Sigma^{-1} V\T v, 
&&\textrm{ so } \|vec_3\| \leq \frac 1 {\sigma_d} & \\
&vec_4 = e_i - e_i v\T {V'} \Lambda^{-1} {U'}\T e_i, 
&&\textrm{ so } \|vec_4\| \leq 1 + \frac 1 {\lambda_d} &\\
\end{align*}
Recall that all singular values, both of $A$ and $A'$, are greater than ${w}$ and that $vec_i \cdot y \sim \mathcal{N}(0, \|vec_i\|^2)$, so w.p. $\geq 1 - \delta_0$ we have that for every $i$ it holds that $term_i \leq \sqrt{\ln(4/\delta_0)} \|vec_i\|$ so
\[\left| x\T (A\T A)^{-1} x - x\T ({A'}\T {A'})^{-1} x \right| \leq 2(\frac 1 {w} + \frac 1 {w^2})\ln(4/\delta_0) \leq \frac {4\ln(4/\delta_0)} w \leq \epsilon_0 \] this concludes the proof in our first stage.

\paragraph{Stage $2$.}

\noindent We assume step $4$ was applied, and denote $B = U(\sqrt{\Sigma^2 + w^2I})V\T$ and $B' = U'(\sqrt{\Lambda^2 + w^2 I}){V'}\T$. We denote the singular values of $B$ and $B'$ as $\sigma_1^B \geq \sigma_2^B \geq \ldots \geq \sigma_d^B$ and $\lambda_1^B\geq \lambda_2^B \geq \ldots \geq \lambda_d^B$ resp. Observe that by definition, for every $i$ we have $(\sigma_i^B)^2 =\sigma_i^2 + w^2$ and $(\lambda_i^B)^2 = \lambda_i^2 + w^2$.

Again, we assume we output $O\T = B\T Y$, and compare $X = B\T Y$ to $X' = {B'}\T Y$. The theorem merely requires Claim~\ref{clm:inequalities_of_privacy_for_covariance} to hold, and they, in turn, depend on the following two conditions.
\begin{align}
& e^{-\epsilon_0/2} \leq \sqrt{\frac{\det({B'}\T{B'})}{\det(B\T B)}} \leq e^{\epsilon_0 / 2} &\label{eq:det_ratio_2} \\
 & \Pr_x\left[ \frac 1 2 \left| x\T \left( (B\T B)^{-1} - ({B'}\T {B'})^{-1}\right) x\right| \geq \epsilon_0/2 \right] \leq \delta_0 & \label{eq:exp_ratio_2}\\
\notag
\end{align}
The second stage deals with the problem that now, the gap $\Delta = B' - B$ is not necessarily a rank-$1$ matrix. However, what we show is that all stages in the proof of Claim~\ref{clm:inequalities_of_privacy_for_covariance} either rely on the singular values or can be written as the sum of a few rank-$1$ matrix multiplications. 

The easier part is to claim that Eq.~\eqref{eq:det_ratio_2} holds. The analysis is a simple variation on the proof of Eq.~\eqref{eq:det_ratio}. Fact~\ref{fact:Linskii_thm} still holds for the singular values of $A$ and $A'$. Observe that $\lambda_i^B > \sigma_i^B$ iff $\lambda_i > \sigma_i$. And so we have 
\[\sqrt{\prod_i \frac {(\lambda_i^B)^2}{(\sigma_i^B)^2}}  \leq \sqrt{\prod_{i\in Big} \frac{\lambda_i^2+w^2}{\sigma_i^2+w^2}} \leq \sqrt{\prod_{i\in Big} \frac{\lambda_i^2}{\sigma_i^2}}\] and the remainder of the proof follows.

We now turn to proving Eq.~\eqref{eq:exp_ratio_2}. We start with an observation regarding ${A'}\T A$ and ${B'}\T B'$.
\begin{align*}
{A'}\T A' & = (A+E)\T(A+E) = A\T A + {A'}\T E + E\T A \cr
{B}\T B\  & = V(\Sigma^2 + w^2 I)V\T = V\Sigma^2 V\T + w^2 I = A\T A + w^2I \cr
{B'}\T B' &=  V'(\Lambda^2+w^2I) {V'}\T = {A'}\T A' + w^2 I \cr
\Rightarrow \ {B'}\T B' - {B}\T B &= {A'}\T E + E\T A
\end{align*}
Now we can follow the same outline as in the proof of~\eqref{eq:exp_ratio}. Fix $x$, then:
\begin{eqnarray*}
& x\T (B\T B)^{-1} x - x\T ({B'}\T {B'})^{-1} x &= x\T (B\T B)^{-1} ({B'}\T {B'}) ({B'}\T {B'})^{-1} x - x\T ({B'}\T {B'})^{-1} x = \cr 
&& = x\T (B\T B)^{-1} \left[ B\T B + {A'}\T E + E\T A \right] ({B'}\T {B'})^{-1} x - x\T ({B'}\T {B'})^{-1} x \cr 
&& = x\T (B\T B)^{-1} \left[ {A'}\T E + E\T A \right] ({B'}\T {B'})^{-1} x \cr 
&& = x\T (B\T B)^{-1} (A\T+E\T)e_i \ \cdot\  v\T  ({B'}\T {B'})^{-1} x  \cr
&& + x\T (B\T B)^{-1} v \ \ \cdot\  e_i\T \left(A' - E\right) ({B'}\T {B'})^{-1} x \cr 
\end{eqnarray*}
It is straight-forward to see that the $i$-th spectral values of $(B\T B)^{-1} A$ is $\frac {\sigma_i} {\sigma_i^2 + w^2} \leq \frac 1 {\sqrt{\sigma_i^2 + w^2}} \leq 1/w$, and similarly for the spectral values of $({B'}\T B')^{-1} {A'}$. We now proceed as before and partition the above sum into multiplications of pairs of terms where $term_i \leq |vec_i \cdot y|$, and $y$ is sampled from $\mathcal{N}(0,I_{n\times n})$ and $x = B\T y$:
\begin{eqnarray*}
& x\T (B\T B)^{-1} x - x\T ({B'}\T {B'})^{-1} x &= 
y\T \left[ B (B\T B)^{-1} (A\T+E\T)e_i \right]\ \cdot\  \left[v\T  ({B'}\T {B'})^{-1} B\T \right] y  \cr
&& + y\T \left[ B (B\T B)^{-1} v \ \right]\ \cdot\ \left[ e_i\T \left(A' - E\right) ({B'}\T {B'})^{-1} B\T\right] y\cr 
\end{eqnarray*}

Lastly, we need to bound all terms that contain the multiplication $({B'}\T {B'})^{-1} B\T y$ in comparison to $({B'}\T {B'})^{-1} {B'}\T y = {B'}\mpinv y$. For instance, take the $term = |vec\T y|$ for $vec\T = e_i\T \left(A' - E\right) ({B'}\T {B'})^{-1} B\T$, and define it as $vec\T = z\T B\T$. We can only bound $\|Bz\|$ using $\sigma_1^B / (\lambda_d^B)^2$, whereas we can bound $\|B'z\|$ with $1/\lambda_d^B < 1/w$. In contrast to before, we do not use the fact that $B\T y = ({B'} - \Delta)\T y$. Instead, we make the following derivations.

First, we observe that for every vector $z$ we have that $\|B'z\| \geq \|A'z\|$ and $\|B'z\| \geq w \|z\|$. Using the fact that $B\T B - \ {B'}\T B'= -{A'}\T E - E\T A$, a simple derivation gives that $\|Bz\|^2 \leq \left(\|B'z\|+\|z\|\right)^2 \leq \left(1+\frac 1 w\right)^2\|B'z\|^2$, and vice-versa. So if $y$ is s.t. $\frac {|z\T B\T y|} {\left(1+\frac 1 w\right)\|B'z\|} > Threshold$ then $\frac {|z\T B\T y|} {\|Bz\|} > Threshold$.
Observe that $z\T B\T y$ is distributed like $\mathcal{N}(0, \|Bz\|^2) = \|Bz\|\mathcal{N}(0,1)$, and so we have that for every $\delta'>0$
\begin{align*}
 \Pr\left[ |z\T B\T y| \geq \sqrt{\log(1/\delta')} \left(1+\frac 1 w\right)\|B'z\| \right] & =  \Pr\left[ \left(\left(1+\frac 1 w\right)\|B'z\|\right)^{-1}|z\T B\T y| \geq \sqrt{\log(1/\delta')}  \right]
\cr 
& \leq \Pr\left[ \left(\|Bz\|\right)^{-1}|z\T B\T y| \geq \sqrt{\log(1/\delta')}  \right] \leq \delta' \qedhere
\end{align*}
\end{proof}

\else
\fi 
\paragraph{Corollary.} Using the definitions of $r$ and $w$ as in Algorithm~\ref{alg:perturbed_covariance} -- the proof of Theorem~\ref{thm:perturbed_covariance_preserves_privacy} actually shows that in the case that $A$ is a matrix with all singular values $\geq w$, then the following simple algorithm preserves $(\epsilon,\delta)$-differential privacy: pick a random $r\times n$ matrix $M$ whose entries are iid normal Gaussians, and output $O = MA$. Furthermore, observe that if $\sigma_d$, the least singular value of $A$, is bigger than, say, $10w$, then one can release $\sigma_d + Lap(1/\epsilon)$ then release $O=MA$. In such a case, users know that for any unit vector $x$ w.p. $\geq 1-\nu$ it holds that $\tfrac 1 r\|Ox\|^2 \leq (1\pm\eta)\|Ax\|^2$.

\paragraph{Comment.} Comparing Algorithms~\ref{alg:output_perturbed_laplacian} and~\ref{alg:perturbed_covariance}, we have that in $L_G = E_G\T E_G$ we ``translate'' the spectral values by $w$, and in $A\T A$ we ``translated'' the spectral values by $w^2$. This is an artifact of the ability to directly compare  the spectal values of $L_G$ and $L_{G'}$ in the first analysis, whereas in the second analysis we compare the spectral values of $A$ and $A'$ (vs. $A\T A$ and ${A'}\T A'$). This is why the noise bounds in the general case are $\tilde O(1/\epsilon\eta)$ times worse than for graphs.

\ifx\ConferenceVersion\undefined
\subsection{Comparison with Other Algorithms}
\label{subsec:comparison_variance}

To the best of our knowledge, no previous work has studied the problem of preserving the variance of $A$ in the same formulation as us. We deal with a scenario where users pose the directions on which they wish to find the variance of $A$. Other algorithms, that publish the PCA or a low-rank approximation of $A$ without compromising privacy (see Section~\ref{subsec:related}), provide users with specific directions and variances. These works are not comparable with our algorithm, as they give a different utility guarantee. For example, low-rank approximations aim at nullifying the projection of $A$ in certain directions. 

Here, we compare our method to the Laplace mechanism, the Multiplicative Weights mechanism and Randomized Response. The bottom line is clear: our method allows one to answer directional variance queries with additive noise which is independent of the given input. Other methods require we add random noise that depends on the size of the matrix, assuming we answer polynomially many queries. 

Our notation is as follows. $n$ denotes the number of rows in the matrix (number of individuals in the data), $d$ denotes the number of columns in the matrix, and we assume each entry is at most $1$. As before, $\epsilon$ denotes the privacy parameter and $k$ denotes the number of queries. Observe that we (again) compare $k$ \emph{predetermined} queries for non-interactive mechanisms with $k$ \emph{adaptively chosen} queries for interactive ones. The remaining parameters are omitted from this comparison. Results are summarized in Table~\ref{tab:mechanisms_comparison_var}.

\subsubsection{Our Algorithm}
\label{subsubsec:our_alg_var}

Our algorithm's utility is computed simply by plugging in $\nu = O(1/k)$ to Theorem~\ref{thm:utility_directional_variance}, which gives a utility bound of $O(\log(k)/\epsilon^2)$. 

\subsubsection{Na\"ively Adding Laplace Noise}
\label{subsubsec:lapalacian_var}

Again, the simplest alternative is to answer each directional-variance query with  $\Phi_x(A) + Lap(0,\epsilon^{-1})$. The composition theorem of~\cite{DRV10} assures us that for $k$ queries we preserve $(O(\sqrt{k} \epsilon),\delta)$-differential privacy.

\subsubsection{Randomized Response}
\label{subsubsec:RR_var}

We now consider a Randomized Response mechanism, similar to the Randomized Response mechanism of~\cite{GuptaRU12}. We wish to output a noisy version of $A\T A$, by adding some iid random noise to each entry of $A\T A$. Since we call two matrices neighbors if they differ only on a single row, denote $v$ as the difference vector on that row. It is simple to see that by adding $v$ to some row in $A$, each entry in $A\T A$ can change by at most $\|v\|_1$. Recall that we require $\|v\|_2=1$ and so $\|v\|_1 \leq \sqrt d$. Therefore, we have that in order to preserve $(\epsilon,\delta)$-differential privacy, it is enough to add a random Gaussian noise of $\mathcal{N}(0,\frac {d\log(d)} {\epsilon^2})$ to each of the $d^2$ entries of $A\T A$.


Next we give the utility guarantee of the Randomized Response scheme. Fix any unit length vector $x$. We think of the matrix we output as $A\T A + N$, where $N$ is a matrix of iid samples from $\mathcal{N}(0,\frac {d\log(d)} {\epsilon^2})$. Therefore, in direction $x$, we add to the true answer a random noise distributed like $x\T N x \sim \mathcal{N}(0,\left(\sum_{i,j}x_i^2x_j^2\right)\frac {d\log(d)} {\epsilon^2}) = \mathcal{N}(0,\frac {d\log(d)} {\epsilon^2})$. So w.h.p the noise we add is within factor of $\tilde O(\sqrt{d}/\epsilon)$ for each query, and for $k$ queries it is within factor of $\tilde O(\sqrt{d\log(k)}/\epsilon)$.

\subsubsection{The Multiplicative Weights Mechanism}
\label{subsubsec:multi_weights_var}

It is not straight-forward to adapt the Multiplicative Weights mechanism to answer directional variance queries. 
We represent $A\T A$ as a histogram over its $d^2$ entries (so the size of the ``universe'' is $N=d^2$), but it is not simple to estimate what is the equivalent of number of individuals in this representation. We chose to take the pessimistic bound of $nd^2$, since this is the $L_1$ bound on the sum of entries in $A\T A$, but we comment this is a highly pessimistic bound. It is fairly likely that the number of individuals in this representation can be set to only $O(d^2)$.

Plugging these parameters into the utility bounds of the Multiplicative Weights mechanism, we get a utility bound of $\tilde O(d\sqrt{n}\log(k)/\epsilon)$. Plugging them into the improved bounds of the IDC mechanism, we get $\tilde O(d\sqrt{n\log(k)/\epsilon})$. Observe that even if replace the pessimistic bound of $nd^2$ with just $d^2$, these bounds depend on $d$.

\begin{table}[t]
\footnotesize
\centering
\begin{tabular}{| p{2.4cm} |c|p{1.0cm}|p{1.0cm}|p{1.0cm}|p{2.5cm}|}
\hline
\centering Method & Additive Error & {Multi-plicative Error?} & {Inter-active?} & {Tract-able?} \cr
\hline
Laplace Noise \cite{Dwork06calibratingnoise}& $O(\sqrt{k}/\epsilon)$ & \centering\XSolid & \centering\Checkmark & \centering\Checkmark \cr
\hline
Randomized Response & $\tilde O(\sqrt{d\log(k)}/\epsilon)$ & \centering\XSolid & \centering\XSolid & \centering\Checkmark \cr 
\hline
\parbox{2cm}{MW~\cite{hardt2010multiplicative} \\ IDC~\cite{GuptaRU12}} &  \parbox{3.0cm}{\centering $\tilde O(d\sqrt{n}\log(k)/\epsilon)$ \\ $\tilde O(d\sqrt{n\log(k)/ \epsilon})$ } & 
\centering{\XSolid} & \centering{\Checkmark} &\centering{\Checkmark} \cr 
\hline
\hline
JL & $O(\log(k) /\epsilon^2)$ & \centering\Checkmark & \centering\XSolid & \centering\Checkmark \cr
\hline
\end{tabular}
\caption{Comparison between mechanisms for answering directional variance queries.\label{tab:mechanisms_comparison_var}}
\end{table}

\else
\fi

\section{Discussion and Open Problems}
\label{sec:open_problems}

The fact that the JL transform preserves differential privacy is likely to have more theoretical and practical applications than the ones detailed in this paper. Below we detail a few of the open questions we find most compelling.

\paragraph{Error depedency on $r$.} Our algorithm projects the edge-matrix of a given graph on $r$ random directions, then publishes these projections. The value of $r$ determines the probability we give a good approximation to a given cut-query, and provided that we wish to give a good approximation to all cut-queries, our analysis requires us to set $r = \Omega(n)$. But is it just an artifact of the analysis? Could it be that a better analysis gives a better bound on $r$? It turns out that the answer is  ``no''. In fact, the direction on which we project the data now have high correlation with the published Laplacian. We demonstrate this with an example.

Assume our graph is composed of a single perfect matching between $2n$ nodes, where node $i$ is matched with node $n+i$. Focus on a single random projection -- it is chosen by picking $\binom {2n} 2$ iid random values $x_{i,j}\sim \mathcal{N}(0,1)$, and for the ease of exposition imagine that the values of the edges in the matching are picked first, then the values of all other pairs of vertices. Now, if we pick the value $x_{i,n+i}$ for the $\langle i, n+i\rangle$ edge, then node $i$ is assigned $x_{i,n+i}$ while node $n+i$ is assigned $-x_{i,n+i}$. So regardless of the sign of $x_{i,n+i}$, \emph{exactly one} of the two nodes $\{i,n+i\}$ is assigned the positive value $|x_{i,n+i}|$ and exactly one is assigned the negative value $-|x_{i,n+i}|$. Define $S$ as the set of $n$ nodes that are assigned the positive values and $\bar S$ as the set of $n$ nodes that are assigned the negative values. The sum of weight crossing the $(S,\bar S)$-cut is distributed like $(X + \frac w n Y)^2$ where $X = \sum_i |x_{i,n+i}|$ and $Y = \sum_i \sum_{j\neq n+i} x_{i,j}$. Indeed, $Y$ is the sum of $n(n-1)$ random normal iid Gaussians, but $X$ is the sum of $n$ \emph{absolute values} of Gaussians. So w.h.p. both $X$ and $Y$ are proportional to $n$. Therefore, in the direction of this particular random projection we estimate the $(S,\bar S)$-cut as $\Omega((n \pm w)^2)=\Omega(n^2)$ rather than $O(n)$. (If $X$ was distributed like the sum of $n$ iid normal Gaussians, then the estimation would be proportional to $(\sqrt n)^2 = n$.)

Assuming that the remaining $r-1$ projections estimate the cut as $O(n)$, then by averaging over all $r$ random projections our estimation of the $(S,\bar S)$-cut is $\omega(n)$, as long as $r = o(n)$.

\paragraph{Error amplification or error detection.} Having established that we do err on some cuts, we pose the question of error amplification. Can we introduce some error-correction scheme to the problem without increasing $r$ significantly? Error amplification without increasing $r$ will allow us to keep the additive error fairly small. One can view $\tilde L$ as a coding of answers to all $2^n$ cut-queries which is guaranteed to have at least $1-\nu$ fraction of the code correct, in the sense that we get a $(\eta,\tau)$-approximation to the true cut-query answer. As such, it is tempting to try some self-correcting scheme -- like adding a random vector $x$ to the vector $\1_S$, then finding the estimation to $x\T L_G x$ and $(\1_s + x)\T L_G x$ and inferring $\1_S\T L_G \1_S$. We were unable to prove such scheme works due to the dot-product problem (see next paragraph) and to query dependencies.

A related question is of error detection: can we tell whether $\tilde L$ gives a good estimation to a cut query or not? One potential avenue is to utilize the trivial guess for $\Phi_G(S)$ -- the expected value $\frac m {\binom n 2}{s(n-s)}$ (we can release $m$ via the Laplace mechanism). We believe this question is related to the problem of estimating the variance of $\{\Phi_G(S)~ : \ |S| = s\}$.



\paragraph{Edges between $S$ and $T$.} Our work assures utility only for cut-queries. It gives no utility guarantees for queries regarding $E(S,T)$, the set of edges connecting two disjoint vertex-subsets $S$ and $T$. The reason is that it is possible to devise a graph where both $E(S,\bar S)$ and $E(T,\bar T)$ are large whereas $E(S,T)$ is fairly small. When $E(S,\bar S)$ and $E(T,\bar T)$ are big, the \emph{multiplicative error} $\eta$ given to both quantities might add too much noise to an estimation of $E(S,T)$. 

The problem relates to the dot-product estimation of the JL transform. It is a classical result that if $M$ is a distance-preserving matrix and $u$ and $v$ are two vectors s.t. $\|M(u+v)\|^2 \approx \|u+v\|^2$ and $\|M(u-v)\|^2 \approx \|u-v\|^2$ then it is possible to bound the difference $\left|Mu\cdot Mv - u\cdot v \right|$. But this bound is a function of $\|u\|$ and $\|v\|$, which in our case translates to a bound that depends on $\|E_G \1_S\|$ and $\|E_G \1_T\|$, both vectors of potentially large norms. 

\paragraph{Other Versions of JL.} The analysis in this works deals with the most basic JL transform, using normal Gaussians. We believe that qualitatively the same results should apply for other versions of the JL transform (e.g., with entries taken in $U_{[-1,1]}$). However, we are not certain whether the same results hold for \emph{sparse} transforms (see~\cite{DasguptaKS10}).

\paragraph{Low rank approximation of a given matrix.} The work of~\cite{HR11} gives a differentially private algorithm that outputs a low-rank approximation of a given matrix $A$, while adding additive error $> \min\{\sqrt d, \sqrt n\}$. Our work, which introduces much smaller noise (independent of $n$ and $d$), does not have such guarantees. Our algorithm could potentially be integrated into theirs. In particular, their algorithm is composed of two stages, and our technique greatly improves the first of the two. The crux of the second stage lies in devising a way to preserve differential privacy when multiplying a given (non-private) $X$ with a private database $A$ without introducing too large of an additive noise. Matrix multiplication via random projections might be such a way.

\ifx\ConferenceVersion\undefined 
\paragraph{Integration with the Multiplicative Weights mechanism.}  When the interactive Multiplicative Weights mechanism is given a user's query, it considers two possible alternatives: answering according to a synthetic database, or answering according to the Laplace mechanism. It chooses the latter alternative only when the two answers are far apart. Its utility guarantees rely on applying the Laplace mechanism only a bounded number of times. An interesting approach might be to add a third alternative, of answering according to the perturbed Laplacian we output. Hopefully, if most updates can be ``charged'' to answers provided by the perturbed Laplacian, it will allow us to improve privacy parameters (noise dependency on $n$).
\else 
\fi

\bibliographystyle{alpha}
\bibliography{JLPrivacy}

\newcommand{\etalchar}[1]{$^{#1}$}
\begin{thebibliography}{BDDW08}

\bibitem[AC06]{Ailon:2006:ANN:1132516.1132597}
Nir Ailon and Bernard Chazelle.
\newblock Approximate nearest neighbors and the fast johnson-lindenstrauss
  transform.
\newblock In {\em Proceedings of the thirty-eighth annual ACM symposium on
  Theory of computing}, STOC '06, pages 557--563, New York, NY, USA, 2006. ACM.

\bibitem[BBV06]{BalcanBV06}
Maria-Florina Balcan, Avrim Blum, and Santosh Vempala.
\newblock Kernels as features: On kernels, margins, and low-dimensional
  mappings.
\newblock {\em Machine Learning}, 65(1):79--94, 2006.

\bibitem[BDDW08]{Baraniuk_Davenport_DeVore_Wakin_2008}
Richard Baraniuk, Mark Davenport, Ronald DeVore, and Michael Wakin.
\newblock A simple proof of the restricted isometry property for random
  matrices.
\newblock {\em Constructive Approximation}, 28(3):253--263, 2008.

\bibitem[BDMN05]{Blum2005}
Avrim Blum, Cynthia Dwork, Frank McSherry, and Kobbi Nissim.
\newblock Practical privacy: the sulq framework.
\newblock In {\em Proceedings of the twenty-fourth ACM SIGMOD-SIGACT-SIGART
  symposium on Principles of database systems}, PODS '05, pages 128--138, New
  York, NY, USA, 2005. ACM.

\bibitem[Bha07]{Bhatia07}
R.~Bhatia.
\newblock {\em Perturbation Bounds for Matrix Eigenvalues (Classics in Applied
  Mathematics)}.
\newblock Society for Industrial and Applied Mathematics, Philadelphia, PA,
  USA, 2007.

\bibitem[BK96]{BenczurK96}
Andr\'{a}s~A. Bencz\'{u}r and David~R. Karger.
\newblock Approximating s-t minimum cuts in $\tilde o(n^2)$ time.
\newblock In {\em Proceedings of the twenty-eighth annual ACM symposium on
  Theory of computing}, STOC '96, pages 47--55, New York, NY, USA, 1996. ACM.

\bibitem[BLR08]{blum2008learning}
A.~Blum, K.~Ligett, and A.~Roth.
\newblock A learning theory approach to non-interactive database privacy.
\newblock In {\em Proceedings of the 40th annual ACM symposium on Theory of
  computing}, pages 609--618. ACM, 2008.

\bibitem[Bou85]{Bourgain_1985}
J~Bourgain.
\newblock On lipschitz embedding of finite metric spaces in hilbert space.
\newblock {\em Israel Journal of Mathematics}, 52(1-2):46--52, 1985.

\bibitem[BSS09]{BatsonSS09}
Joshua~D. Batson, Daniel~A. Spielman, and Nikhil Srivastava.
\newblock Twice-ramanujan sparsifiers.
\newblock In {\em STOC}, pages 255--262, 2009.

\bibitem[CDM{\etalchar{+}}05]{Chawla05towardprivacy}
Shuchi Chawla, Cynthia Dwork, Frank Mcsherry, Adam Smith, and Larry~Joseph
  Stockmeyer.
\newblock Toward privacy in public databases.
\newblock In {\em In TCC}, pages 363--385, 2005.

\bibitem[DKM{\etalchar{+}}06]{DworkKMMN06}
Cynthia Dwork, Krishnaram Kenthapadi, Frank McSherry, Ilya Mironov, and Moni
  Naor.
\newblock Our data, ourselves: Privacy via distributed noise generation.
\newblock In {\em EUROCRYPT}, pages 486--503, 2006.

\bibitem[DKS10]{DasguptaKS10}
Anirban Dasgupta, Ravi Kumar, and Tam\'{a}s Sarlos.
\newblock A sparse johnson: Lindenstrauss transform.
\newblock In {\em Proceedings of the 42nd ACM symposium on Theory of
  computing}, STOC '10, pages 341--350, New York, NY, USA, 2010. ACM.

\bibitem[DMNS06]{Dwork06calibratingnoise}
Cynthia Dwork, Frank Mcsherry, Kobbi Nissim, and Adam Smith.
\newblock Calibrating noise to sensitivity in private data analysis.
\newblock In {\em In Proceedings of the 3rd Theory of Cryptography Conference},
  pages 265--284. Springer, 2006.

\bibitem[DN03]{DinurN03}
Irit Dinur and Kobbi Nissim.
\newblock Revealing information while preserving privacy.
\newblock In {\em PODS}, pages 202--210, 2003.

\bibitem[DRV10]{DRV10}
Cynthia Dwork, Guy~N. Rothblum, and Salil~P. Vadhan.
\newblock Boosting and differential privacy.
\newblock In {\em FOCS}, pages 51--60, 2010.

\bibitem[DS10]{dwork2010differential}
Cynthia Dwork and Adam Smith.
\newblock Differential privacy for statistics: What we know and what we want to
  learn.
\newblock {\em Journal of Privacy and Confidentiality}, 1(2):2, 2010.

\bibitem[Dwo11]{Dwork11survey}
Cynthia Dwork.
\newblock A firm foundation for private data analysis.
\newblock {\em Commun. ACM}, 54(1):86--95, 2011.

\bibitem[GHRU11]{GuptaHRU11}
Anupam Gupta, Moritz Hardt, Aaron Roth, and Jonathan Ullman.
\newblock Privately releasing conjunctions and the statistical query barrier.
\newblock In {\em STOC}, pages 803--812, 2011.

\bibitem[GRU12]{GuptaRU12}
Anupam Gupta, Aaron Roth, and Jonathan Ullman.
\newblock Iterative constructions and private data release.
\newblock In {\em TCC}, pages 339--356, 2012.

\bibitem[HJ90]{HornJohnson}
Roger~A. Horn and Charles~R. Johnson.
\newblock {\em Matrix Analysis}.
\newblock Cambridge University Press, 1990.

\bibitem[HLM10]{HardtLM10}
Moritz Hardt, Katrina Ligett, and Frank McSherry.
\newblock A simple and practical algorithm for differentially private data
  release.
\newblock {\em CoRR}, abs/1012.4763, 2010.

\bibitem[HLMJ09]{HayLMJ09}
Michael Hay, Chao Li, Gerome Miklau, and David Jensen.
\newblock Accurate estimation of the degree distribution of private networks.
\newblock In {\em ICDM}, pages 169--178, 2009.

\bibitem[HMT11]{HalkoMT11}
Nathan Halko, Per-Gunnar Martinsson, and Joel~A. Tropp.
\newblock Finding structure with randomness: Probabilistic algorithms for
  constructing approximate matrix decompositions.
\newblock {\em SIAM Review}, 53(2):217--288, 2011.

\bibitem[HR10]{hardt2010multiplicative}
M.~Hardt and G.N. Rothblum.
\newblock A multiplicative weights mechanism for privacy-preserving data
  analysis.
\newblock In {\em 2010 IEEE 51st Annual Symposium on Foundations of Computer
  Science}, pages 61--70. IEEE, 2010.

\bibitem[HR12]{HR11}
Moritz Hardt and Aaron Roth.
\newblock Beating randomized response on incoherent matrices.
\newblock In {\em STOC}, 2012.

\bibitem[IM98]{Indyk98approximatenearest}
Piotr Indyk and Rajeev Motwani.
\newblock Approximate nearest neighbors: Towards removing the curse of
  dimensionality.
\newblock pages 604--613, 1998.

\bibitem[JL84]{JL84}
W.~Johnson and J.~Lindenstauss.
\newblock {Extensions of Lipschitz maps into a Hilbert space}.
\newblock {\em Contemporary Mathematics}, 1984.

\bibitem[Kle97]{Kleinberg97twoalgorithms}
Jon~M. Kleinberg.
\newblock Two algorithms for nearest-neighbor search in high dimensions.
\newblock pages 599--608, 1997.

\bibitem[KRSY11]{KRSY11}
Vishesh Karwa, Sofya Raskhodnikova, Adam Smith, and Grigory Yaroslavtsev.
\newblock Private analysis of graph structure.
\newblock {\em PVLDB}, 4(11):1146--1157, 2011.

\bibitem[LLR94]{Linial:1994}
N.~Linial, E.~London, and Y.~Rabinovich.
\newblock The geometry of graphs and some of its algorithmic applications.
\newblock In {\em Proceedings of the 35th Annual Symposium on Foundations of
  Computer Science}, SFCS '94, pages 577--591, Washington, DC, USA, 1994. IEEE
  Computer Society.

\bibitem[Mil64]{miller1964multidimensional}
K.S. Miller.
\newblock {\em Multidimensional Gaussian distributions}.
\newblock SIAM series in applied mathematics. Wiley, 1964.

\bibitem[MM09]{McSherryM09}
Frank McSherry and Ilya Mironov.
\newblock Differentially private recommender systems: Building privacy into the
  netflix prize contenders.
\newblock In {\em KDD}, pages 627--636, 2009.

\bibitem[MT07]{mcsherry2007mechanism}
Frank McSherry and Kunal Talwar.
\newblock Mechanism design via differential privacy.
\newblock In {\em FOCS}, pages 94--103, 2007.

\bibitem[NRS07]{nissim2007smooth}
K.~Nissim, S.~Raskhodnikova, and A.~Smith.
\newblock Smooth sensitivity and sampling in private data analysis.
\newblock In {\em Proceedings of the thirty-ninth annual ACM Symposium on
  Theory of Computing}, pages 75--84. ACM, 2007.
\newblock Full version in: \url{http://www.cse.psu.edu/~asmith/pubs/NRS07}.

\bibitem[PRT{\etalchar{+}}98]{Papadimitriou98latentsemantic}
Christos~H. Papadimitriou, Prabhakar Raghavan, Hisao Tamaki, S.~Vempala, and
  Santosh Vempala.
\newblock Latent semantic indexing: A probabilistic analysis.
\newblock pages 159--168. ACM press, 1998.

\bibitem[RR10]{roth2010interactive}
A.~Roth and T.~Roughgarden.
\newblock Interactive privacy via the median mechanism.
\newblock In {\em Proceedings of the 42nd ACM symposium on Theory of
  computing}, pages 765--774. ACM, 2010.

\bibitem[Sar06]{Sarlos06}
Tam{\'a}s Sarl{\'o}s.
\newblock Improved approximation algorithms for large matrices via random
  projections.
\newblock In {\em FOCS}, pages 143--152, 2006.

\bibitem[Sch00]{Schulman:2000}
Leonard~J. Schulman.
\newblock Clustering for edge-cost minimization (extended abstract).
\newblock In {\em Proceedings of the thirty-second annual ACM symposium on
  Theory of computing}, STOC '00, pages 547--555, New York, NY, USA, 2000. ACM.

\bibitem[SS08]{SpielmanS08}
Daniel~A. Spielman and Nikhil Srivastava.
\newblock Graph sparsification by effective resistances.
\newblock In {\em STOC}, pages 563--568, 2008.

\bibitem[ST04]{SpielmanT04}
Daniel~A. Spielman and Shang-Hua Teng.
\newblock Nearly-linear time algorithms for graph partitioning, graph
  sparsification, and solving linear systems.
\newblock In {\em STOC}, pages 81--90, 2004.

\bibitem[Vem05]{vempala2005random}
S.S. Vempala.
\newblock {\em The Random Projection Method}.
\newblock DIMACS Series in Discrete Mathematics and Theoretical Computer
  Science. American Mathematical Society, 2005.

\bibitem[War65]{Warner65}
Stanley~L. Warner.
\newblock {Randomized Response: A Survey Technique for Eliminating Evasive
  Answer Bias}.
\newblock {\em Journal of the American Statistical Association}, 60(309):63+,
  March 1965.

\end{thebibliography}

\appendix

\ifx\ConferenceVersion\undefined
\else
\section{Laplacians: Proof of Utility and Comparison}
\label{sec:apx_laplacians_completions}
We give the missing utility proof from Section~\ref{sec:perturbed_laplacian}, then detail the comparison of Algorithm~\ref{alg:output_perturbed_laplacian} with existing algorithms.

\fi

\ifx\ConferenceVersion\undefined 
\else 
\section{Covariance Matrices: Missing Proofs and Comparison}
\label{sec:apx_covariance_completions}
Below are the missing proofs from Section~\ref{sec:covariance_matrix}, and the comparison between Algorithm~\ref{alg:perturbed_covariance} with existing algorithms.

\fi 

\section{Facts from Linear Algebra}
\label{sec:apx_linear_algebra}

Below we prove the various facts from linear algebra that were mentioned in the body of the paper. We add the proofs, yet we comment that they are not new. In fact, existing literature~\cite{HornJohnson,Bhatia07} have documented proofs of the general theorems from which our facts are derived. Throughout this section, we denote the $i$-th eigenvalue (resp. the $i$-th singular value) of a given matrix $M$ in a descending order, assuming all eigenvalues are real, as $ev_i(M)$ (resp. as $sv_i(M)$).

\paragraph{Proving Fact~\ref{fact:Weyl_inequality}.} The fact uses the max-min characterization of the singular values of a matrix.

\begin{theorem}[Courant-Fischer Min-Max Principle]
For every matrix $A$ and every $1\leq i \leq n$, the $i$-th singular value of $A$ satisfies:
\[sv_i(A) = \max_{S: \dim(S) = i}\ \ \min_{x\in S:~ \|x\|=1} \ \langle A x, x\rangle \]
\end{theorem}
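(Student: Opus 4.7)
The plan is to reduce the statement to the standard spectral decomposition, which is available here because the quadratic form $\langle Ax,x\rangle$ only depends on the symmetric part of $A$, and in all applications of this theorem in the paper (Weyl's inequality for PSD Laplacians/covariances) $A$ is symmetric PSD so singular values coincide with eigenvalues. Under this reading, write $A = \sum_{j=1}^n \sigma_j u_j u_j^{\T}$ with $\sigma_1 \geq \sigma_2 \geq \cdots \geq \sigma_n \geq 0$ and $\{u_j\}$ an orthonormal eigenbasis. The two directions of the equality will be handled separately.

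For the lower bound (``max $\geq \sigma_i$''), I would exhibit an explicit $i$-dimensional subspace achieving $\sigma_i$. Let $S^\star = \mathrm{span}(u_1,\ldots,u_i)$. Any unit $x \in S^\star$ can be expanded as $x = \sum_{j=1}^i c_j u_j$ with $\sum_{j=1}^i c_j^2 = 1$, hence $\langle A x, x\rangle = \sum_{j=1}^i c_j^2 \sigma_j \geq \sigma_i \sum_{j=1}^i c_j^2 = \sigma_i$, and equality is attained at $x = u_i$. So the inner minimum over $S^\star$ equals exactly $\sigma_i$, forcing the outer maximum to be at least $\sigma_i$.

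For the upper bound (``max $\leq \sigma_i$''), I would use a dimension-counting argument to show that \emph{every} $i$-dimensional subspace $S$ contains a unit vector whose Rayleigh quotient is at most $\sigma_i$. Consider $T = \mathrm{span}(u_i,u_{i+1},\ldots,u_n)$, a subspace of dimension $n-i+1$. Since $\dim(S) + \dim(T) = i + (n-i+1) = n+1 > n$, the intersection $S \cap T$ is nontrivial, so pick a unit $x \in S\cap T$. Writing $x = \sum_{j=i}^n c_j u_j$ with $\sum c_j^2 = 1$, we get $\langle Ax, x\rangle = \sum_{j=i}^n c_j^2 \sigma_j \leq \sigma_i$. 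Thus $\min_{x\in S,\|x\|=1}\langle Ax,x\rangle \leq \sigma_i$ for every $i$-dimensional $S$, giving the upper bound.

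The only real obstacle is the mild notational issue: as stated, the theorem mentions singular values of an arbitrary matrix, but the Rayleigh-quotient characterization $\langle Ax,x\rangle$ is really about eigenvalues of symmetric matrices. I would resolve this by noting at the outset that within the paper the theorem is only invoked for symmetric PSD matrices (the Laplacians $L_G, L_{G'}$ and the Gram matrices $A^{\T}A$), where singular values and eigenvalues coincide, and therefore the cleanest presentation is to prove the eigenvalue version above and then observe that $\mathrm{sv}_i = \mathrm{ev}_i$ in this setting. Deriving Fact~\ref{fact:Weyl_inequality} is then a one-liner: for each $i$-dimensional $S^\star$ witnessing $\sigma_i^2$ on $L_G$, the inequality $L_G \preceq L_{G'}$ yields $\min_{x\in S^\star,\|x\|=1}\langle L_{G'}x,x\rangle \geq \min_{x\in S^\star,\|x\|=1}\langle L_G x,x\rangle = \sigma_i^2$, so $\lambda_i^2 \geq \sigma_i^2$ by taking the maximum over $S^\star$.
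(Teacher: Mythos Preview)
The paper does not actually prove the Courant--Fischer Min--Max Principle; it merely states it as a known result and then uses it to establish Weyl's inequality (Claim~\ref{clm:apx_weyl}). Your argument is the standard textbook proof and is correct, including your observation that the statement as written is slightly loose (the Rayleigh-quotient form characterizes eigenvalues of the symmetric part, which agrees with singular values precisely in the symmetric PSD setting where the paper applies it). Your final paragraph deriving Fact~\ref{fact:Weyl_inequality} is essentially identical to the paper's own proof of Claim~\ref{clm:apx_weyl}.
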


\begin{claim}[Weyl Inequality]
\label{clm:apx_weyl}
Let $A$ and $B$ be positive semidefinite matrices s.t. the matrix $E = B-A$ satisfies $x\T E x \geq 0$ for every $x$. Then for every $1\leq i \leq n$ it holds that
\[sv_i(A) \leq sv_i(B)\]
\end{claim}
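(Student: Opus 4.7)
The plan is to apply the Courant-Fischer Min-Max Principle directly, exploiting the monotonicity of the quadratic form under the PSD order. Since $A$ and $B$ are both PSD, their singular values coincide with their eigenvalues, so the characterization $sv_i(M) = \max_{S:\ \dim(S) = i}\ \min_{x\in S,\ \|x\|=1} \langle Mx, x\rangle$ given in the theorem applies to both $A$ and $B$.

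First I would fix $i$ and let $S^\star$ be an $i$-dimensional subspace that attains the maximum in the Courant-Fischer formula for $A$, so that $sv_i(A) = \min_{x\in S^\star,\ \|x\|=1} \langle Ax, x\rangle$. Next I would observe that for every unit vector $x \in S^\star$,
\[
\langle Bx, x\rangle = \langle Ax, x\rangle + \langle Ex, x\rangle \geq \langle Ax, x\rangle,
\]
where the inequality uses the hypothesis that $x\T E x \geq 0$ for every $x$. Taking the minimum over unit vectors $x\in S^\star$ on both sides yields $\min_{x\in S^\star,\ \|x\|=1} \langle Bx, x\rangle \geq sv_i(A)$.

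Finally, by the Courant-Fischer characterization applied to $B$, the value $sv_i(B)$ is the maximum of $\min_{x\in S,\ \|x\|=1} \langle Bx, x\rangle$ over all $i$-dimensional subspaces $S$, and in particular it dominates the value achieved by the specific subspace $S^\star$. Combining these inequalities gives $sv_i(B) \geq \min_{x\in S^\star,\ \|x\|=1}\langle Bx, x\rangle \geq sv_i(A)$, completing the proof. There is no real obstacle here — the argument is one line once the right subspace is chosen — the only subtlety worth flagging is the identification of singular values with eigenvalues for PSDs, which justifies applying the min-max principle (stated for singular values) to the quadratic form $\langle Ax, x\rangle$ directly.
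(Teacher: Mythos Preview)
Your proof is correct and is essentially identical to the paper's own argument: both fix the $i$-dimensional subspace $S^\star$ (the paper calls it $S_A$) attaining the Courant--Fischer maximum for $A$, use $\langle Ex,x\rangle\ge 0$ to get $\langle Bx,x\rangle\ge\langle Ax,x\rangle$ on $S^\star$, and then bound $sv_i(B)$ from below by the min over $S^\star$. Your added remark that singular values coincide with eigenvalues for PSD matrices is a helpful clarification that the paper leaves implicit.
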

\begin{proof}
Let $S_A$ be the $i$-dimensional subspace s.t. $sv_i(A) = \min_{x\in S_A:~ \|x\|=1} \langle Ax,x\rangle$. For every $x\in S_A$ we have that 
\[\langle Ax,x\rangle =\langle Ax,x\rangle +0 \leq \langle Ax,x\rangle + \langle Ex,x\rangle = \langle Bx,x\rangle\] so $sv_i(A) \leq \min_{x\in S_A:~ \|x\|=1} \langle Bx,x\rangle$. Thus $sv_i(B) = \max_S \min_{x\in S:~ \|x\|=1} \langle Bx,x\rangle \geq sv_i(A)$.
\end{proof}

Fact~\ref{fact:Weyl_inequality} is a direct application of Claim~\ref{clm:apx_weyl} to $L_G$ and $L_{G'}$.
\vspace{0.7cm}

\paragraph{Proving Fact~\ref{fact:inverse_order}.} The proof builds on the following two claims.

\begin{claim}
\label{clm:apx_inverse}
Let $A$ be a positive-semidefinite matrix. If $x\T A x \geq x\T x$ for every $x\in (Ker(A))^\perp$ then it also holds that $x\T A\mpinv x \leq x\T x$ for every $x\in (Ker(A))^\perp$.
\end{claim}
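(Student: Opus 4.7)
The plan is to diagonalize $A$ and reduce both inequalities to a statement about the positive singular values (equivalently, eigenvalues) of $A$. Since $A$ is PSD, I would write its SVD as $A = U\Sigma U\T$ with $\Sigma = \textrm{diag}(\sigma_1,\ldots,\sigma_r,0,\ldots,0)$ where $r=rank(A)$, $\sigma_1\geq\ldots\geq\sigma_r>0$, and the columns $u_1,\ldots,u_r$ of $U$ form an orthonormal basis of $(Ker(A))^\perp$ (while $u_{r+1},\ldots,u_n$ span $Ker(A)$). Recall that in this notation $A\mpinv = U\Sigma\mpinv U\T$, where $\Sigma\mpinv$ has diagonal entries $1/\sigma_1,\ldots,1/\sigma_r,0,\ldots,0$.

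Next, I would use the hypothesis to extract a pointwise bound on the positive singular values. Any $x\in (Ker(A))^\perp$ can be uniquely written as $x = \sum_{i=1}^{r} c_i u_i$, so $x\T x = \sum_{i=1}^r c_i^2$ and $x\T A x = \sum_{i=1}^r \sigma_i c_i^2$. The assumption $x\T A x \geq x\T x$ for every such $x$ implies, by taking $x=u_i$, that $\sigma_i \geq 1$ for every $1\leq i\leq r$.

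The conclusion is then immediate: for the same decomposition,
\[
x\T A\mpinv x \;=\; \sum_{i=1}^{r} \frac{c_i^2}{\sigma_i} \;\leq\; \sum_{i=1}^{r} c_i^2 \;=\; x\T x,
\]
where the inequality uses $\sigma_i\geq 1$. Since this holds for every $x\in (Ker(A))^\perp$, the claim follows.

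There is essentially no obstacle here; the only minor subtlety is to make sure that expressing $x$ in the $u_1,\ldots,u_r$ basis is legitimate, which is exactly what $x\in (Ker(A))^\perp$ guarantees, and to observe that the statement $x\T A x\geq x\T x$ being required only on $(Ker(A))^\perp$ matches the support on which $A\mpinv$ is nontrivial.
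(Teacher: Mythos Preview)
Your proof is correct. Both you and the paper diagonalize $A$ via its spectral decomposition, but the mechanics differ slightly. You test the hypothesis on the eigenvectors $u_i$ to extract the pointwise bound $\sigma_i \geq 1$, and then plug this into the formula $x\T A\mpinv x = \sum_i c_i^2/\sigma_i$. The paper instead performs a single change of variable: given $x\in (Ker(A))^\perp$ it sets $y = A^{\dagger/2} x$ (in their notation $y = V\Sigma^{-1}V\T x$, with $A = V\Sigma^2 V\T$), observes that $y\in (Ker(A))^\perp$, and then applies the hypothesis to $y$ directly, since $y\T A y = x\T x$ and $y\T y = x\T A\mpinv x$. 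Your route is arguably more transparent because it identifies exactly what the hypothesis says about the spectrum; the paper's substitution is a touch slicker in that it never isolates individual eigenvalues and would generalize verbatim to operators where one might not want to pick eigenvectors. Either way, the content is the same.
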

\begin{proof}
Denote the SVD of $A = V\Sigma^2 V\T = \sum_{i=1}^r \sigma_i^2 v_iv_i\T$, where $v_i$ is the $i$-th column of $V$. Fix $x \in (Ker(A))^\perp$ and observe that $x$ is span by the same $r$ vectors $\{v_1, v_2, \ldots, v_r\}$, so we can write $x = \sum_{i=1}^r \alpha_i v_i$. Denote $y = V\Sigma^{-1} V\T x = \sum_{i=1}^r \sigma_i^{-1} v_i v_i\T x$. We have that $y = \sum_{i=1}^r \alpha_i\sigma_i^{-1} v_i$ so $y\in (Ker(A))^\perp$. Therefore $y\T A y \geq y\T y$, but $y\T y = x\T A\mpinv x$ and $y\T A y = x\T x$.
\end{proof}

\begin{claim}
\label{clm:apx_two_PSDs}
Let $A$ and $B$ be two positive-semidefinite matrices s.t. $Ker(A) = Ker(B)$.  Then if for every $x$ we have that $x\T A x \leq x\T B x$ then $x\T A\mpinv x \geq x\T B\mpinv x$.
\end{claim}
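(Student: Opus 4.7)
The plan is to reduce to the familiar positive-definite case by restricting to the common non-kernel subspace, and then invoke Claim~\ref{clm:apx_inverse} (together with a symmetric version of its proof) to flip the inequality.

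Let $\mathcal{V}=(Ker(A))^\perp = (Ker(B))^\perp$. For any $x\in Ker(A)=Ker(B)$ one has $x\T A\mpinv x = x\T B\mpinv x = 0$ directly from the SVD form of the Moore--Penrose inverse, so the claimed inequality holds trivially there. Thus I only need to establish $x\T A\mpinv x \geq x\T B\mpinv x$ for $x\in\mathcal{V}$. On $\mathcal{V}$ both $A$ and $B$ act as genuinely invertible positive-definite operators, and $A\mpinv$, $B\mpinv$ agree with the inverses of the restrictions. In particular, $B^{1/2}$ is a well-defined, invertible PSD operator on $\mathcal{V}$, and it maps $\mathcal{V}$ onto itself.

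Now substitute $x = B^{-1/2} z$ with $z\in\mathcal{V}$. The hypothesis $x\T A x\le x\T B x$ becomes
\begin{equation*}
z\T\, B^{-1/2} A\, B^{-1/2}\, z \;\le\; z\T z\qquad \text{for all } z\in\mathcal{V}.
\end{equation*}
Write $C = B^{-1/2} A B^{-1/2}$, viewed as a PSD operator on $\mathcal{V}$; the above says $C\preceq I$ on $\mathcal{V}$. I claim that this forces $C^{-1}\succeq I$ on $\mathcal{V}$. This is precisely the symmetric version of Claim~\ref{clm:apx_inverse}: rerunning the SVD argument there with the inequality reversed (set $y=V\Sigma^{-1}V\T x$ as before, so that $y\T C y = x\T x$ and $y\T y = x\T C^{-1}x$; if $y\T C y\le y\T y$ for all $y\in\mathcal{V}$, then $x\T x \le x\T C^{-1}x$ for all $x\in\mathcal{V}$). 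Equivalently, since $C$ is PSD on $\mathcal{V}$ its eigenvalues all lie in $(0,1]$, so those of $C^{-1}$ lie in $[1,\infty)$.

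To finish, observe that $C^{-1} = B^{1/2} A\mpinv B^{1/2}$ on $\mathcal{V}$. Substituting back $z = B^{1/2} x$ for $x\in\mathcal{V}$ into $z\T C^{-1} z \ge z\T z$, we obtain
\begin{equation*}
x\T B^{1/2}\bigl(B^{1/2} A\mpinv B^{1/2}\bigr)B^{1/2} x \;\ge\; x\T B^{1/2} B^{1/2} x,
\end{equation*}
which one would then clean up by noting that on $\mathcal{V}$ the operator $B^{1/2}B^\dagger B^{1/2}$ equals the identity, yielding $x\T A\mpinv x \ge x\T B\mpinv x$. The only real obstacle is bookkeeping: one must consistently restrict every operator (and its inverse) to $\mathcal{V}$ so that $B^{-1/2}$, $B^{1/2}B^\dagger B^{1/2}$, and $(B^{-1/2}AB^{-1/2})^{-1} = B^{1/2} A^\dagger B^{1/2}$ are all interpreted as maps on $\mathcal{V}$; once this is done, the argument goes through cleanly.
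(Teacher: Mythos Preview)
Your approach is essentially the paper's: restrict to $\mathcal{V}$, conjugate to compare a PSD operator against the identity, invoke Claim~\ref{clm:apx_inverse}, and unwind. The only cosmetic difference is that the paper conjugates by $A^{-1/2}$ (written as $V\Sigma^{-1}V\T$) rather than your $B^{-1/2}$; this produces $C=A^{-1/2}BA^{-1/2}\succeq I$ and lets it quote Claim~\ref{clm:apx_inverse} verbatim instead of the reversed form you (correctly) rederive.

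There is, however, a slip in your final back-substitution. From $z\T C^{-1}z\ge z\T z$ with $C^{-1}=B^{1/2}A\mpinv B^{1/2}$ on $\mathcal{V}$, you should plug in $z=B^{-1/2}x$ (the same change of variables you used in the forward direction), which immediately gives $x\T A\mpinv x\ge x\T B^{-1}x=x\T B\mpinv x$ on $\mathcal{V}$. Your choice $z=B^{1/2}x$ instead yields $x\T B A\mpinv B\,x\ge x\T B x$, and the remark that $B^{1/2}B\mpinv B^{1/2}$ is the identity on $\mathcal{V}$ does not turn that into the desired inequality. (You can still recover by a further substitution $x\mapsto B\mpinv y$, but that is an unnecessary detour.) With the correct back-substitution the proof is complete.
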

\begin{proof}
We denote the SVD $A = V\Sigma^2 V\T$ and $B = W \Pi^2 W\T$. Because we can split any vector $x$ into the direct sum $x = x_0 + x_\perp$ where $x_0 \in Ker(A)=Ker(B)$ and $x_\perp \in (Ker(A))^\perp$, and since we have that the required inequality holds trivially for $x_0$, then we need to show it holds for $x_\perp$. Given any $z\in (Ker(A))^\perp$, set $y = V\Sigma^{-1} V\T z$. We know that $y\T A y \leq y\T B y$, and therefore
\[z\T z = y\T A y \leq y\T B y = z\T \left(V \Sigma^{-1} V\T W \Pi^2 W\T V \Sigma^{-1} V\T\right) z \stackrel{\mathrm {def}} = z\T C z\]
The above proves that $C$ is a positive semidefinite matrix whose kernel is exactly $Ker(A) = Ker(B)$, and so it follows from Claim~\ref{clm:apx_inverse} that $z\T z \geq z\T C\mpinv z$. Let $I|_{Ker(C)\perp}$ be the matrix which nullifies every element in $Ker(C)$, yet operates like the identity on $(Ker(C))^\perp$. One can easily check that $C\mpinv = V \Sigma V\T W \Pi^{-2} W\T V \Sigma V\T$ by verifying that indeed $C\mpinv C = C C\mpinv = I|_{Ker(C)\perp}$. So now, given $x$ we denote $z = V \Sigma^{-1} V\T x$ and apply the above to deduce $x\T B\mpinv x = z\T C\mpinv z \leq z\T z = x\T A\mpinv x$.
\end{proof}

Fact~\ref{fact:inverse_order} is a direct application of Claim~\ref{clm:apx_two_PSDs} to $L_G$ and $L_{G'}$.
\vspace{0.7cm}

\paragraph{Proving Fact~\ref{fact:Linskii_thm}.} Much like Claim~\ref{clm:apx_weyl} follows from the Courant-Fischer Min-Max principle, Lindskii's theorem follows from a generalization of this principle.

\begin{theorem}[Wielandt's Min-Max Principle.]
Let $A$ be a $n\times n$ symmetric matrix. Then for every $k$ and every $k$ indices $1\leq i_1 < i_2 < \ldots < i_k \leq n$ we have that
\[ \sum_{j=1}^k ev_{i_j}(A) = \max_{\substack{S_1 \subset S_2 \subset\ldots\subset S_k \\ \dim(S_j) = i_j}} \ \ \min_{\substack{x_j \in S_j: \\ x_j \textrm{ orthonormal}}} \ \sum_{j=1}^k \langle Ax_j, x_j\rangle \]
\end{theorem}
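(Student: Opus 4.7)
The plan is to diagonalize $A$ and reduce both inequalities to an accounting of how mass is distributed over eigenvalues. Write the spectral decomposition $A = U\Lambda U\T$ with orthonormal eigenvectors $u_1,\ldots,u_n$ and $\lambda_\ell := ev_\ell(A)$ in decreasing order. For any orthonormal family $x_1,\ldots,x_k$, introducing $c_{j\ell} := \langle x_j, u_\ell\rangle$ and $a_\ell := \sum_{j=1}^k c_{j\ell}^2$ rewrites the objective as $\sum_{j=1}^k \langle Ax_j, x_j\rangle = \sum_{\ell=1}^n \lambda_\ell a_\ell$, with $0 \le a_\ell \le 1$ (Bessel) and $\sum_\ell a_\ell = k$. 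Both inequalities will then be verified by analyzing the extra linear constraints that the nested chain imposes on the $a_\ell$'s.

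For the $\ge$ direction I would exhibit the canonical witness chain $S_j^\star := \mathrm{span}(u_1,\ldots,u_{i_j})$. Any orthonormal $x_1,\ldots,x_j \in S_j^\star$ are $j$ unit vectors inside the $i_j$-dimensional span of $u_1,\ldots,u_{i_j}$, which forces $\sum_{\ell \le i_j} a_\ell \ge j$ for each $j$. The canonical choice $a^\star_{i_j} = 1$ and zero elsewhere (realized by $x_j := u_{i_j}$, which are themselves orthonormal) meets every such constraint with equality, so the partial sums $B_\ell := \sum_{m \le \ell}(a_m - a^\star_m)$ satisfy $B_\ell \ge 0$ and $B_n = 0$. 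Abel summation then yields $\sum_\ell \lambda_\ell(a_\ell - a^\star_\ell) = \sum_{\ell=1}^{n-1} B_\ell(\lambda_\ell - \lambda_{\ell+1}) \ge 0$, so the inner minimum over this specific chain is exactly $\sum_j \lambda_{i_j}$ and the outer maximum is at least this.

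For the $\le$ direction, given any chain $S_1 \subset \cdots \subset S_k$ I would induct on $n$. When $i_k < n$, restrict to $S_k$ and consider the operator $A_{S_k} := P_{S_k} A P_{S_k}$ on the $i_k$-dimensional space $S_k$; Cauchy's interlacing theorem gives $ev_r(A_{S_k}) \le \lambda_r$ for all $r \le i_k$. Applying the inductive hypothesis to $A_{S_k}$ on $S_k$ with the same chain $S_1 \subset \cdots \subset S_k$ produces orthonormal $x_j \in S_j$ satisfying $\sum_j \langle A x_j, x_j\rangle = \sum_j \langle A_{S_k} x_j, x_j\rangle \le \sum_j ev_{i_j}(A_{S_k}) \le \sum_j \lambda_{i_j}$, since $x_j \in S_j \subset S_k$. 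The boundary case $i_k = n$ is handled by a perturbation argument: infinitesimally rotate $S_1,\ldots,S_{k-1}$ so that $u_n \notin S_j$ for $j < k$, apply the above reduction to an approximating problem inside $u_n^\perp$ after setting $x_k := u_n$, and take limits, relying on compactness of the chain-compatible frame manifold and continuity of $(x_1,\ldots,x_k) \mapsto \sum_j \langle A x_j, x_j\rangle$.

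The main obstacle is precisely this boundary case $i_k = n$ of the $\le$ direction: the Cauchy-interlacing reduction collapses when $S_k$ is all of $\mathbb{R}^n$, so one cannot shrink the ambient dimension. The perturbation-and-limit route above sidesteps this, but an alternative is a secondary induction on $k$ that sets $x_k := u_n$ and recurses on $u_n^\perp$ using the chain $\{S_j \cap u_n^\perp\}_{j<k}$; this requires tracking how intersecting with $u_n^\perp$ can drop the dimension of each $S_j$ by one and how the resulting index sequence still produces the desired bound in terms of $\lambda_{i_1},\ldots,\lambda_{i_{k-1}}$, which demands genericity assumptions (again reachable via a limiting argument).
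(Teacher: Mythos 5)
First, a remark on scope: the paper does not actually prove this theorem. Wielandt's minimax principle is quoted from the literature (\cite{Bhatia07}) as the known generalization of Courant--Fischer from which Lindskii's inequality is then derived, so there is no in-paper argument to compare yours against. Judged on its own terms, your proposal is half right. The $\ge$ direction is complete and correct: reducing the objective to the weights $a_\ell$ with the constraints $\sum_{\ell\le i_j}a_\ell\ge j$ and then Abel-summing against the decreasing sequence $\lambda_\ell$ is a clean argument, and the witness chain $\mathrm{span}(u_1,\ldots,u_{i_j})$ with the frame $x_j=u_{i_j}$ shows the value $\sum_j\lambda_{i_j}$ is attained.

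The $\le$ direction has a genuine gap, and it is not a peripheral one. The Cauchy-interlacing step is fine, but note what it accomplishes: inside $S_k$ the last subspace of the chain has full dimension, so your induction on $n$ reduces \emph{every} instance to the boundary case $i_k=n$. That case therefore carries the entire weight of the upper bound, and the argument you sketch for it fails. Setting $x_k:=u_n$ forces the remaining vectors into $T_j:=S_j\cap u_n^\perp$, and one has $\dim T_j=i_j$ only in the non-generic event $S_j\subseteq u_n^\perp$; generically $\dim T_j=i_j-1$, and the inductive bound inside $u_n^\perp$ then yields $\sum_{j<k}\lambda_{i_j-1}$, which is an \emph{over}estimate. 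No perturbation or limiting argument can rescue this, because the favourable configuration is the measure-zero one (you appear to be conflating ``$u_n\notin S_j$'' with ``$S_j\perp u_n$''). Concretely, take $n=k=2$, $i_1=1$, $i_2=2$, $A=\mathrm{diag}(1,0)$, $S_1=\mathrm{span}(e_1+e_2)$, $S_2=\mathbb{R}^2$: the only admissible frame up to signs is $x_1=(e_1+e_2)/\sqrt2$, $x_2=(e_1-e_2)/\sqrt2$, which attains $\lambda_1+\lambda_2=1$, whereas your prescription $x_2=u_2=e_2$ leaves $S_1\cap u_2^\perp=\{0\}$ and no valid $x_1$ at all. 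The correct choice of $x_k$ must depend on the whole chain, not only on $A$. Handling this step --- via a double induction, or a lemma extracting orthonormal representatives compatible with the two nested flags $\{S_j\}$ and $\{\mathrm{span}(u_{i_j},\ldots,u_n)\}$ --- is precisely the nontrivial content of the proof in the cited reference; you would need to import that argument or find a genuinely different route for the upper bound.
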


\begin{claim}[Linskii's theorem.]
\label{clm:apx_Linskii}
Let $A$ and $B$ be a $n\times n$ symmetric matrix. Denote $E = B-A$. Then for every $k$ and every $k$ indices $1\leq i_1 < i_2 < \ldots < i_k \leq n$ we have that
\[ \sum_{j=1}^k ev_{i_j}(B) \leq \sum_{j=1}^k ev_{i_j}(A) + \sum_{i=1}^k ev_{i}(E)\]
\end{claim}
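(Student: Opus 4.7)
The plan is to apply Wielandt's Min-Max Principle directly to $B = A + E$. By Wielandt's principle,
\[
\sum_{j=1}^k ev_{i_j}(B) \ =\ \max_{\substack{S_1 \subset \cdots \subset S_k \\ \dim(S_j)=i_j}} \ \min_{\substack{x_j \in S_j \\ \text{orthonormal}}} \ \sum_{j=1}^k \langle B x_j, x_j\rangle,
\]
and since $\langle B x_j, x_j\rangle = \langle A x_j, x_j\rangle + \langle E x_j, x_j\rangle$, the whole task reduces to showing that the $E$-term is uniformly bounded by $\sum_{i=1}^k ev_i(E)$ on any orthonormal $k$-tuple.

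First I would fix an arbitrary flag $S_1 \subset \cdots \subset S_k$ with $\dim(S_j)=i_j$, and let $\{x_j^*\}_{j=1}^k$ be orthonormal vectors with $x_j^* \in S_j$ that attain $\min \sum_j \langle A x_j, x_j\rangle$ on that flag. Since the minimum of $\sum_j\langle Bx_j,x_j\rangle$ over orthonormal choices from the flag is bounded above by the value at the specific tuple $\{x_j^*\}$, I get
\[
\min_{\{x_j\}} \sum_{j=1}^k \langle B x_j, x_j\rangle \ \leq\ \sum_{j=1}^k \langle A x_j^*, x_j^*\rangle\ +\ \sum_{j=1}^k \langle E x_j^*, x_j^*\rangle.
\]
The first term equals $\min_{\{x_j\}} \sum_j\langle A x_j, x_j\rangle$ by choice of $\{x_j^*\}$. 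For the second, I would invoke Ky Fan's inequality, which says that for any orthonormal $k$-tuple $\{y_1,\ldots,y_k\}$ in $\mathbb{R}^n$,
\[
\sum_{j=1}^k \langle E y_j, y_j\rangle\ \leq\ \sum_{i=1}^k ev_i(E).
\]
This itself is an immediate consequence of Wielandt's principle applied to $E$ with the index choice $i_j = j$ and the nested spaces $S_j = \mathrm{span}(y_1,\ldots,y_j)$.

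Combining these, for every flag $S_1\subset\cdots\subset S_k$ I obtain
\[
\min_{\{x_j\}} \sum_{j=1}^k \langle B x_j, x_j\rangle \ \leq\ \min_{\{x_j\}}\sum_{j=1}^k \langle A x_j, x_j\rangle\ +\ \sum_{i=1}^k ev_i(E).
\]
Taking the maximum over all admissible flags of both sides and applying Wielandt's principle once more (now to $A$) yields exactly $\sum_j ev_{i_j}(B) \leq \sum_j ev_{i_j}(A) + \sum_{i=1}^k ev_i(E)$, as desired.

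The only non-trivial piece is Ky Fan's inequality, but as noted this is merely the $i_j=j$ specialization of the already-stated Wielandt principle, so no additional machinery is required. The main subtlety to be careful about is that the minimum of a sum is not the sum of minima, which is why the argument must first fix an $A$-minimizer $\{x_j^*\}$ and then evaluate the $E$-part on that same tuple, rather than trying to split the min across the two terms.
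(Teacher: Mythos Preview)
Your proof is correct and follows essentially the same route as the paper: split $B=A+E$, bound the $E$-contribution on any orthonormal $k$-tuple by $\sum_{i=1}^k ev_i(E)$ via Ky Fan (the $i_j=j$ case of Wielandt), and conclude by Wielandt's principle for $A$ and $B$. The only cosmetic difference is the order of quantifiers: the paper first fixes the $B$-maximizing flag $T_1\subset\cdots\subset T_k$ and then observes that $\min_{x_j\in T_j}\sum_j\langle Ax_j,x_j\rangle\leq \sum_j ev_{i_j}(A)$, whereas you work with an arbitrary flag, fix the $A$-minimizer $\{x_j^*\}$ on it, and take the max over flags at the end; both arrive at the same inequality by the same mechanism.
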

\begin{proof}
Fix $i_1 < i_2 <\ldots  < i_k$ and let $T_1, T_2, \ldots, T_k$ the subspaces for which
\[\sum_{j=1}^k ev_{i_j}(B) = \min_{\substack{x_j \in T_j: \\ x_j \textrm{ orthonormal}}} \sum_{j=1}^k \langle Bx_j, x_j\rangle\] For every $v_1, v_2, \ldots, v_k$ orthonormal we have that $\sum_{j=1}^k \langle Bv_j, v_j\rangle = \sum_{j=1}^k \langle Av_j, v_j\rangle + \sum_{j=1}^k \langle Ev_j, v_j\rangle \leq \sum_{j=1}^k \langle Av_j, v_j\rangle + \sum_{i=1}^k ev_i(E)$, so \[\sum_{j=1}^k ev_{i_j}(B)\leq \min_{\substack{x_j \in T_j: \\ x_j \textrm{ orthonormal}}} \sum_{j=1}^k \langle Ax_j, x_j\rangle + \sum_{i=1}^k ev_i(E)\] and clearly
\[\sum_{j=1}^k ev_{i_j}(A) = \max_{\substack{S_1 \subset S_2 \subset\ldots\subset S_k \\ \dim(S_j) = i_j}} \min_{\substack{x_j \in S_j: \\ x_j \textrm{ orthonormal}}} \sum_{j=1}^k \langle Ax_j, x_j\rangle \geq \min_{\substack{x_j \in T_j: \\ x_j \textrm{ orthonormal}}} \sum_{j=1}^k \langle Ax_j, x_j\rangle\qedhere \]
\end{proof}
Now, Fact~\ref{fact:Linskii_thm} follows from Claim~\ref{clm:apx_Linskii}, and from the following observation of Weilandt.\footnote{We thank Moritz Hardt for bringing this observation to our attention.} Given a $m\times n$ matrix $M$, the matrix $N = \begin{pmatrix}0& M \cr M\T & 0\end{pmatrix}$ is symmetric and has eigenvalues which are (in descending order) $\big\{sv_1(A), sv_2(A), \ldots, sv_m(A), 0, 0, \ldots, 0, -sv_m(A), -sv_{m-1}(A), \ldots, -sv_1(A)\big\}$.
\end{document}